\newtheorem{theorem}{Theorem}
\newtheorem{definition}{Definition}
\newtheorem{assumption}{Assumption}
\newtheorem{proposition}{Proposition}
\newcounter{AlfaBeaitCounter}[section]
\newcommand{\arctantwo}{\mathop{}\mathopen{}\mathrm{arctan2}\mathop{}}
\algnewcommand{\LeftComment}[1]{\Statex \(\triangleright\) #1}
\newcommand{\hlhl}[1]{{#1}}
\DeclareRobustCommand{\ns}[1]{{#1}}
\DeclareRobustCommand{\ds}[1]{{#1}}
\newcommand{\OPEN}{\textit{OPEN}}
\newcommand{\CLOSED}{\textit{CLOSED}}
\newcommand{\parent}{\textit{parent}}
\newcommand{\pathh}{\textit{path}}
\newcommand{\successors}{\textit{Successors}}
\newcommand{\LO}{\textit{LO}}
\newcommand{\LOMap}{\textit{LOMap}}
\begin{document}




\begin{center}
{\LARGE Altitude-Loss Optimal Glides in Engine Failure Emergencies -- Accounting for Ground Obstacles and Wind}

\bigskip\bigskip
{\Large Daniel~Segal, Aharon~Bar-Gill and  Nahum~Shimkin\footnote{Corresponding Author, e-mail shimkin@ee.technion.ac.il}}\\[8pt]
{ Viterbi Faculty of Electrical and Computer Engineering\\
Technion -- Israel Institute of Technology} 
\bigskip\bigskip

{April 13, 2023}
\end{center}

\bigskip
\begin{center}
\noindent\fbox{%
    \parbox{\textwidth-2.0in}{%
       \textit{\large Daniel Segal, the lead author of this paper and a former master's student of the two other authors, tragically deceased before this work was completed. The current paper presents the last version written by Daniel, with minor style modification.} 
        \medskip

        \textit{\large This work was a followup on Daniel's M.Sc.\ thesis, which was published as an article in the Journal of Guidance, Control and Dynamics (2019), and won the 2018 best graduate student paper award by the Israeli Association for Automatic Control.}    
        \medskip
        
        \textit{\large This arXiv publication is dedicated to the memory of Daniel Segal, an outstanding engineer, an accomplished researcher, and a friend.}
    }%
}
\end{center}

\bigskip\bigskip 

\begin{center}
\textbf{\Large Abstract} \\[12pt]
\parbox{\textwidth-0.4in}{%
Engine \ns{failure} is a recurring emergency in General Aviation and fixed-wing UAVs, often requiring the pilot or remote operator to carry out carefully planned glides to safely reach a candidate landing strip. \ns{We} tackle the problem of minimizing the altitude loss of a thrustless aircraft flying towards a designated target position. 
\ns{Extending previous work on optimal glides without obstacles, we consider here trajectory planning of optimal gliding in the the presence of ground obstacles, while accounting for wind effects.}    
Under \ns{simplifying model} assumptions, in particular neglecting the effect of turns, we characterize the optimal solution as comprising straight glide segments between \ns{iteratively-determined} extreme points \hlhl{on} the obstacles.
\ns{Consequently, the optimal trajectory is included in an iteratively-defined \emph{reduced visibility graph}, and can be obtained by a standard graph search algorithm, such as A$^*$.}
We further quantify the effect of turns to verify a safe near-optimal glide trajectory. We apply our algorithm on a Cessna 172 model, in realistic scenarios, demonstrating both the altitude-loss optimal trajectory calculation, and \ns{determination of} airstrip reachability.
 }%
\end{center}

\bigskip
\noindent\textbf{Keywords}: Optimal gliding trajectory, engine cutoff emergency, trajectory planning, obstacle avoidance, visibility graph

\bigskip\bigskip

\section*{Nomenclature}

\noindent\begin{tabular}{@{}lcl@{}}
    ALO &=& \hlhl{Altitude Loss Optimal}\\
    \ns{FTP} &=& \hlhl{Free Tangent Point}\\
	$C_L$, $C_{D0}$ &=& Lift and profile drag coefficients\\
	$D$, $L$ &=& Drag and lift force\\
		$f_g$, $f_0$&=& Glide slope function and sink rate function, respectively\\
	$J$  &=& Cost function\\
	$K$ &=& Induced drag coefficient\\
			$m$ &=& Aircraft's mass\\

\end{tabular}

\noindent\begin{tabular}{@{}lcl@{}}
	$n$ &=& Load factor\\
	$P_A$ &=& Aircraft current position\\
	$P_B$ &=& The candidate landing site location\\
	$q$, $q_0$ &=& Dynamic pressure, optimal dynamic pressure in still air\\
	$S$ &=& Aerodynamic surface\\
	$t,t_f$  &=& Time of flight, total final time of flight from engine cut-off\\
	$V$ &=& True air velocity\\
	$V_g$ &=& Aircaft velocity relative to the ground\\
	$V^*$  &=& The minimum glide slope velocity\\
	$V_{opt}$  &=& The height-loss optimal flight velocity\\
	$V_0$ &=& Optimal velocity in still air\\
	$V_i$ &=& Graph Vertex $i$\\
	$V_{stall}$, $V_{max}$  &=& Aircraft's stall and maximum velocity\\
	$V_s$  &=& Initial flight velocity\\
	$\mathbf{W}$  &=& Air-mass velocity vector in Ground frame\\
	$W_{X}$, $W_{Y}$ &=& North and east components of the air-mass velocity\\
	$W_\perp$, $W_\parallel$  &=& Crosswind and in-plane wind\\
	$[X,Y,Z]^T$&=& Location coordinate vector, North, East, Down, in the Ground frame\\
	$\gamma$ &=& Vertical path angle in the Air-mass frame\\
	$\psi$ &=& Flight heading relative to the air-mass\\
	$\psi_g$ &=& Flight heading in the Ground frame\\
	$\phi$ &=& The bank angle\\
	$\rho$ &=& Air-density\\
\end{tabular}\\

\section{Introduction}
Aircraft engine \ns{failure} is a recurring emergency in civil aviation, and especially in General Aviation (GA) as well \ns{as} in UAV operations, \ns{possibly leading to total loss of thrust}.
\ns{A report by the Australian Transport Safety Bureau on engine failures in light aeroplanes indicates a failure rate of piston engines between 1 to 6 per 10000 flight hours} \cite{ref:engine_failures_2016}.  
Faced with such a predicament, the pilot must contend with the task of guiding the thrustless aircraft to a \ns{safe} landing strip, a task that may be further hindered by the presence of significant wind, elevated ground obstacles, and limited visibility conditions. This challenging task may be aided by \hlhl{an algorithm} that can determine in real-time the reachability of candidate landing strips, plan in real-time an optimized gliding trajectory to a selected landing site, and guide the pilot in following this trajectory. 

\ns{We address} this situation by considering the problem of Altitude-Loss Optimal (ALO) trajectories, namely finding gliding trajectories that minimize the altitude loss between a starting position and a target point.
The premise is that the aircraft will be able to lose any excess altitude when it approaches the designated landing strip by a suitable maneuver, and therefore trajectory planning is focused on reaching the destination point at maximal altitude. 
\ns{Moreover, solving for the minimal altitude-loss trajectory also resolves the issue of the reachability of a candidate landing point.} 

\ns{In a previous paper} \cite{ref:Segal}, ALO gliding trajectories were 
obtained under possibly significant wind conditions, albeit in the absence of obstacles.
The present paper extends this work by addressing terrain-induced obstacles, \ns{such as those determined by a ground elevation map,} that must be circumvented during the aircraft's thrustless descent towards the designated target position. \ns{A major challenge is to obtain an algorithm that reliably computes the required trajectory in real time, namely in a matter of seconds, to address an ongoing emergency situation.}
\ns{Such algorithms are to be incorporated in flight-management architectures that enable the avionics to assist the pilot or remote operator during emergency situations}, as discussed in \cite{ref:TotalLossofThrust, ref:AircraftLossOfControl2017}. 


\subsection{Literature Review}

\ns{Aircraft trajectory planning} and the related field of robotic path planning are wide topics with extensive literate (e.g., \cite{ref:LaVallePllaningAlg}). Recent interest has been largely motivated by UAV applications: Recent surveys can be found in \cite{survey2010,survey2019,survey2020}.  
We will focus here on works that address trajectory planning for emergency landing. Moreover, we mostly address works that consider this problem in conjunction with obstacle avoidance.
Additional references for optimal gliding without obstacles can be found in \cite{ref:Segal}.

Grid-based methods divide the configuration space into cells of given size or resolution, and search for the optimal path over the graph that connects adjoining cells. 
The scheme proposed in \cite{ref:Adler} uniformly discretized the state space, employed flight primitives to connect grid points, and utilized the Dijkstra algorithm for optimal graph search. This allows for avoiding ground obstacles, but the method is computationally inefficient. Papers \cite{ref:HeuristicGeneticAlgorithmApproaches,ref:EvaluatingHardwarePlatforms} suggest and examine the use of genetic algorithms for searching over a dense grid. 

Sampling-based motion planning algorithms have proved effective in high dimensional spaces. In \cite{ref:RRT_Star_AR,ref:EmergencyLandingGuidance} the authors employ variants of the RRT* algorithm 
for emergency landing path planning. Such probabilistic optimization methods can only provide probabilistic guarantees on their convergence times. 

From a control-theoretic viewpoint, in \cite{ref:ReachabilityBasedLanding} the authors formulate the emergency landing problem as a Hamiltonian-Jacobi-Bellman (HJB) reachability problem in 6DOF space. This problem can be generally solved numerically using gridding of the state space; however, the resulting high dimensionality leads to prohibitive computation times. A sub-optimal solution is proposed for certain sub-problems using the concept of flight primitives, partially relying on the formulation in \cite{ref:Adler}.  In \cite{ref:Fixed-WingUAV} the authors time-discretize the dynamic state equations, either in nonlinear or linearized form, and apply a general optimal control solver to compute the solution. The effectiveness of this scheme is demonstrated for short-range landing scenarios with no obstacles.

Roadmap methods for obstacle avoidance employ geometric constructs to designate specific points in the configuration space, and then restrict the search to the graph that connects these points.  The papers \cite{ref:MeuleauEmergencyLandingPlanner2009a,ref:MeuleauEmergencyLandingPlanner2009b,ref:MeuleauEmergencyLandingPlannerExpriment} consider path planning for emergency landing using visibility graphs. Starting with a 2D visibility graph, the authors propose heuristic extensions of that graph to the 3D space problem. Searching these graphs generally leads to suboptimal trajectories. This general approach is akin to ours, however the emphasis in the present paper is on characterizing and finding \emph{optimal} paths, under our modeling assumptions and more specific objective function. Related work in \cite{Ayhan2019} suggests preflight contingency planning for engine failure: Trajectories that avoid no-fly zones are determined by setting way-points, which are connected by wind-dependent trochoidal paths.  We note that the visibility graph approach has also been used for 3D path planning for \emph{powered} aircraft, see for example the recent papers \cite{Frontera17, Ahmad17, Majeed18, Damato19, Huang19} and references therein.

\subsection{Main Contribution and Paper Outline}

In this work, we establish analytic results that allow deriving an efficient algorithm that computes the ALO gliding trajectory, subject to constant winds, and in the presence of general \hlhl{shaped terrain-induced obstacles.
Similar} to \cite{ref:Segal}, we employ an approximate, \hlhl{problem-specific aerodynamic model of the aircraft; this model uses the aircraft} speed (or angle of attack) and roll angle as instantaneous controls. As shown in \cite{ref:Segal}, the 
ALO solution between two points in the absence of obstacles is a fixed-heading trajectory with constant velocity (the magnitude of which depends on the wind intensity and direction). This free-space optimal fixed-heading glide solution serves as the basic component of the optimal trajectory, which generally comprises straight flight segments between obstacles. Our analysis initially neglects the effect of turns, assuming that the required direction changes can be carried out instantaneously and with no altitude loss. 

The proposed algorithm utilizes the roadmap approach for obstacle avoidance, by adapting and extending the visibility graph approach (e.g., \cite{ref:LaVallePllaningAlg}) which is known to be optimal for planar problems.
This approach allows to search only through salient points on the obstacles, rather than creating and searching through a dense discretization of the entire space.
\hlhl{The underlying idea is to create a sparse search graph, whose intermediate vertices constitute on-contour points on the ground obstacles; the links between these points correspond to fixed-heading glide segments, which comprise the trajectories to candidate landing sites.} This graph is generated iteratively, starting from an initial point of known altitude, \hlhl{the engine cutoff point}. 

Our results imply that a shortest path search \hlhl{over} the \hlhl{generated} graph indeed yields the required ALO trajectory. Optimality does not depend on the shape of the obstacles; hence the method is applicable for general terrain maps. 

To further account for \hlhl{the effect} of aircraft turns between straight flight segments, we develop an estimate for the altitude loss associated with such turns. These estimates can either be used to provide safer elevation loss guarantees for a previously-computed trajectory, or more \hlhl{generally -- be superposed onto} the search graph nodes (as these correspond to aircraft turns) during the graph building and search process. \hlhl{Thus, near-optimal trajectories, that also accommodate the effect of turns, are obtained.} 

\hlhl{To summarize, the main contributions of this paper are: (1) A method to calculate a local 2D obstacles map via intersections between wind-induced manifolds and the terrain ahead and then finding the on-contour extreme points to serve as grid nodes; solving OGS algorithmics over such evolving grids enables to optimally bypass obstacles of general shape. (2) We obtain a novel result, Theorem 3 in Subsection III.B, which states that it is enough to compute only two points to circumvent any terrain-induced local obstacle. (3) We convert the 3D obstacles problem into a sequence of 2D problems over a digital map and solve them via employing the A$^*$ algorithm. We show that the trajectory obtained from this algorithmic procedure is indeed the optimal trajectory in terms of altitude loss, subject to wind and terrain elevations. (4)  We formulate an extension to the proposed algorithm to consider the effect of turns.}


The paper is organized as follows. \hlhl{In Section} \ref{sec:ProblemFormulation} we formulate our optimization problem, including the notion of terrain-induced obstacles and wind modeling. 
In Section \ref{sec:SolutionConcept} we establish some theoretical results that \hlhl{substantiate} our approach, and present the proposed ALO trajectory planning algorithm that accounts for ground obstacles and winds. \hlhl{To generate the terrain-induced obstacles, we employ uniformly-spaced digital terrain mapping.}
In Section \ref{sec:EffectOfTruns} we extend our approach to consider the effect of turns and obtain sub-optimal trajectories. Next, in Section \ref{sec:DemoScenarios} we demonstrate our results on realistic engine cutoff scenarios. We summarize our main contributions and conclusions in section \ref{sec:Conclusion}.
Appendices \ref{subsec:AerodynamicModel}, \ref{subsec:AltitudeLossDuetoTurns} and \ref{subsec:NumericalExperiments} include some additional details on modeling and \hlhl{sample performance analyses.}
Appendix \ref{subsec:Flight} describes an initial flight experiment that was conducted with a Cessna 172 aircraft, towards validating the proposed concept and algorithm.   

\section{Model and Problem Statement} \label{sec:ProblemFormulation}
We proceed to present our trajectory planning problem and its mathematical modeling, outline our assumptions, and formulate the optimization problem.
Our interest is in computing an efficient gliding trajectory, in terms of minimal altitude loss, from a current location $\mathbf{A}$ to a candidate landing-strip $\mathbf{B}$, subject to wind and terrain-induced obstacles. The aerodynamic model that underlies this work is similar to that of \cite{ref:Segal}, and is briefly outlined in Appendix \ref{subsec:AerodynamicModel} for completeness.
\subsection{Frames of Reference}
We denote the Ground frame as a Local Level Local North (LLLN) inertial frame, with $X$ pointing to the North, $Y$ to East, and $Z$ downwards, i.e., NED coordinates. Its origin is located at the projection of the aircraft center of mass on the ground at sea level altitude at time $t=0$. Throughout this work, we consider a constant wind vector $(W_{X},W_{Y})$ of the air-mass relative to the ground frame. 
Noting our approximation of $\gamma\approx 0$ (see Appendix \ref{subsec:AerodynamicModel}, Equations (\ref{eq:X_dot},\ref{eq:Y_dot})), we have 
\begin{align*}
\dot{X} &= V\cos(\psi) + W_{X}\\
\dot{Y} &= V\sin(\psi) + W_{Y}
\end{align*}
when $\psi$ and $V$ are the aircraft heading and velocity relative to the air-mass and $\dot{X},\dot{Y}$ are the aircraft velocity north and east components in the Ground Frame.

The flight heading $\psi_g$ in the Ground frame is given by
\begin{equation}\label{eq:HeadingSlopeEq}
\psi_g = \arctantwo(\dot{Y},\dot{X})
\end{equation}
where $\arctantwo$ is the standard four-quadrant arctangent. 
Note that maintaining constant velocity $V$ and constant flight direction $\psi$ implies constant ground heading, $\psi_g$, in the Ground frame. 

\subsection{Terrain-Induced Obstacles}
Terrain-induced obstacles, or ground obstacles, can be naturally represented by an elevation map. We employ a digital map, e.g., the Shuttle Radar Topography Mission (SRTM) database \cite{ref:USGS} as the source for the elevation data. SRTM provides elevation values with a spatial resolution of about 30 m and elevation accuracy better than 16 m. By performing standard interpolation between the nearest samples in the discrete map $\textit{dtm}[m,n]$, we can produce a continuous elevation function, $\textit{dtm}(X,Y)$. The elevation map imposes the constraint
\begin{equation*}
-Z>\textit{dtm}(X,Y)+\textit{Clearance}
\end{equation*}
on the feasible trajectory. The \textit{Clearance} is meant to provide a safe distance from the ground. It should include the altitude error of the elevation
data, the aircraft instruments altitude error and the trajectory tracking error. We use the notation $\textit{DTM}(X,Y)$ to denote the elevation map augmented by this clearance; that is $\textit{DTM}(X,Y) \triangleq \textit{dtm}(X,Y)+\textit{Clearance}$.

\subsection{The Optimization Problem}\label{subsec:OptimizationProblem}
In the basic problem considered in this paper, we are given the current aircraft position $P_A$ and elevation $Z_A=Z(0)$, and a candidate landing site location, $P_B$. For attainability, we require to minimize the altitude loss between the current location $P_A$ and the destination location $P_{B}$.

\hlhl{We employ the model, derived in} \cite{ref:Segal} \hlhl{and detailed also in Appendix} \ref{subsec:AerodynamicModel}. \hlhl{This derivation is subject to:}
\begin{assumption}\label{asm:NoTurns}\
(a) We remove the constraint on $\dot{\psi}$ (Equation (\ref{eq:QuasiXiDotTan})), allowing $\psi$ to change freely. (b) The effect of boundary conditions in terms of initial and final velocity vectors is neglected. Consequently, both the aircraft velocity and pose at the initial and final points are not constrained. (c) The velocity control variable is limited to the feasible flight envelope of the aircraft ($V_{stall}(n) \leq V \leq V_{max}$). (d) We consider constant air-density. (e) We apply an optimistic cost on turns ($\phi=0$). (f) We consider the long-duration segment of the two time-scales of the problem; therefore, the change rate of the fast variables are eliminated: $\dot{\gamma} \cong 0$ and $\dot{V} \cong 0$. (g) We adopt the small-angle assumption for the FPA,  $\cos(\gamma) \approx 1$.
\end{assumption} 
\hlhl{Bringing in from Appendix} \ref{subsec:AerodynamicModel} \hlhl{the derivation result for the sink-rate function:} 
\begin{equation*}
\dot{Z} = f_0(V,\phi)=K_{SR}\left( \frac{V^4+n(\phi)^2V_0^4}{V} \right)
\end{equation*}
where $K_{SR}= \frac{\rho S C_{D0}}{2mg}$, $V_0=\sqrt{\frac{2mg}{\rho S}\sqrt{\frac{K}{C_{D0}}}}$, and $n(\phi) = \frac{1}{\cos(\phi)}$. 
The stall limit Eq.\ (\ref{eq:VstallEq}):
\begin{equation*}
V_{stall}(n(\phi))=\sqrt{\frac{2mg}{\rho S C_{Lmax}}n(\phi)}
\end{equation*}

The control variables in our model are the flight velocity, $V(t)$, and the flight heading relative to the air-mass, $\psi(t)$, while the variables, $X$, $Y$ and $Z$ are the state variables
in our problem. We define the cost as the altitude loss from $P_A$ to $P_B$. As we neglect the effect of turns on altitude-loss, we nullify the bank-angle variable in the sink-rate function, Eq.\  (\ref{eq:ImprovedSinkrateEquation}), namely, $\dot{Z}=f_0(V(t),\phi(t)) \cong f_0(V(t),0)$. Therefore, the altitude loss is given by the integral of sink rate $\dot{Z}=f_0(V(t),0)$ along the trajectory:
\begin{equation}\label{eq:CostFunctionEquation}
J(V)=Z(t_f)-Z(0)=\int_{0}^{t_f}f_0(V(t),0)dt
\end{equation}
We can now state our optimization problem:
\begin{equation}\label{eq:OptimizationProblem}
\begin{array}{cl}
\displaystyle \min_{V(t),\psi(t)}   & J(V) \\
\text{subject to} &  \dot{X} = V \cos(\psi) +W_{X}  \\
			& \dot{Y}  = V \sin(\psi) +W_{Y}\\
			& \dot{Z}=f_0(V(t),0)\\
			& V_{stall}(1) \leq V \leq V_{max}\\
			& -Z>\textit{DTM}(X,Y)\\
			& (X(0),Y(0)) =  P_A \\
			& Z(0)=Z_A\\
			& (X(t_f),Y(t_f))=P_{B}\
			\end{array}
\end{equation}
\hlhl{Note that Assumption} \ref{asm:NoTurns} \hlhl{
turns the model} Eq.\ (\ref{eq:X_dot})-Eq.\ (\ref{eq:xiDotEq}) \hlhl{into the approximate model in} Eq.\ (\ref{eq:OptimizationProblem}) here-above. \hlhl{Following our assumption that turns are instantaneous and do not incur altitude loss, our initial conditions do not include the aircraft orientation. We address this complementary effect in Section} \ref{sec:EffectOfTruns}.

Solving this optimization problem means that \hlhl{we aim at reaching the candidate landing site} $P_{B}$ \hlhl{with minimal} altitude loss.

\subsection{The ALO Free-Space Glide}\label{subsec:FreeSpaceGlide}
Let us first recall that the optimal gliding trajectory of an aircraft flying between two given points in fixed wind and in absence of obstacles, is indeed a straight path with fixed heading and speed. 
\begin{theorem}[Theorems 1 and 2 in \cite{ref:Segal}]\label{thrm:Theorem1}
	Consider the problem (\ref{eq:OptimizationProblem}) of an aircraft flying inside a constant velocity air-mass $(W_{X},W_{Y})$ subject to an imposed ground destination, as well as minimum (stall) and maximum velocity constraints. In the absence of ground obstacles, the optimal trajectory in the sense of minimal altitude loss must maintain a constant velocity and fixed heading. Furthermore, the optimal flight speed is given in Eq.\ (\ref{eq:Vopt}) below.
\end{theorem}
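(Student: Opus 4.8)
The plan is to treat (\ref{eq:OptimizationProblem}) as a free-terminal-time optimal control problem in the horizontal state $(X,Y)$ and to apply Pontryagin's Minimum Principle. The decisive structural observation is that, with obstacles absent, neither the running cost $f_0(V,0)$ nor the dynamics $\dot X = V\cos\psi + W_{X}$, $\dot Y = V\sin\psi + W_{Y}$ depend on $(X,Y)$, while the $Z$-equation merely integrates the running cost and decouples entirely. I would therefore drop $Z$, keep $(X,Y)$ as the only states with fixed endpoints $P_A,P_B$ and free $t_f$, and form the Hamiltonian $H = f_0(V,0) + \lambda_X(V\cos\psi + W_{X}) + \lambda_Y(V\sin\psi + W_{Y})$. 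Since $H$ is independent of the states, the costate equations give $\dot\lambda_X = \dot\lambda_Y = 0$, so $(\lambda_X,\lambda_Y)$ is a \emph{constant} vector along any extremal. This translation invariance is the engine of the whole argument.

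From constant costates the two control conclusions follow almost immediately. Minimizing $H$ over $\psi$ amounts to minimizing $V(\lambda_X\cos\psi + \lambda_Y\sin\psi)$, whose minimizer points opposite to the fixed vector $(\lambda_X,\lambda_Y)$; hence the optimal heading $\psi$ is constant, i.e.\ fixed. Minimizing over $V$ in the interior of the flight envelope gives the stationarity condition $f_0'(V,0) = \sqrt{\lambda_X^2 + \lambda_Y^2}$, where $f_0'$ denotes the derivative in $V$. Since $f_0'(V,0) = K_{SR}(3V^2 - V_0^4/V^2)$ is strictly increasing for $V>0$, this equation has a unique root, so $V$ is constant as well. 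When the root leaves $[V_{stall}(1),V_{max}]$ the constrained minimizer sits at a boundary, again a constant value; either way $V$ is constant. This establishes the qualitative claim of constant speed and fixed heading.

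To pin down the actual speed value (the content of Eq.\ (\ref{eq:Vopt})), I would invoke the free-final-time condition: the problem is autonomous and $t_f$ is free, so $H \equiv 0$ on the optimal arc. Substituting the optimal heading---along which $\lambda_X\cos\psi + \lambda_Y\sin\psi = -f_0'(V,0)$ and the wind $\mathbf{W}$ contributes only through its in-plane component $W_\parallel$---reduces $H=0$ to the scalar relation $f_0(V,0) = f_0'(V,0)\,(V + W_\parallel)$. This is exactly an implicit equation for $V_{opt}$ in terms of the in-plane wind, recovering the stated formula. Note that the heading direction itself, and hence the split of $\mathbf{W}$ into $W_\parallel$ and the crosswind $W_\perp$, is fixed by the geometric requirement that the constant ground-velocity vector point from $P_A$ to $P_B$ (the crab angle).

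The step I expect to be the main obstacle is upgrading these necessary conditions into a genuine optimality (and uniqueness) statement, since PMP only delivers candidate extremals. I would therefore argue separately that the unique constant-speed fixed-heading extremal is the global minimizer---either by establishing existence of an optimum over the closed, envelope-bounded control set together with uniqueness of the extremal, or by a direct comparison/convexity argument showing that any speed or heading variation strictly increases $\int f_0(V,0)\,dt$ for the prescribed displacement. Care is also needed at the envelope boundaries, to confirm that the clamped speed is still constant and consistent with $H=0$, and to verify feasibility: that the fixed-heading glide actually reaches $P_B$ with a crab angle cancelling $W_\perp$ while the resulting ground speed stays positive. Since this is precisely Theorems 1 and 2 of \cite{ref:Segal}, I would then either reproduce that argument in full or cite it, using the derivation above as the self-contained skeleton.
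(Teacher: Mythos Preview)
The paper does not prove this theorem at all; it is simply restated from \cite{ref:Segal} and used as an input for the rest of the development. Your fallback of citing \cite{ref:Segal} is therefore exactly what the paper does. Your PMP sketch is the natural route (and is essentially the argument in \cite{ref:Segal}): the constant-costate observation from translation invariance, and the resulting constancy of $\psi$ and $V$, are correct.

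There is, however, a genuine error in your $H=0$ step. The costate $(\lambda_X,\lambda_Y)$ is anti-aligned with the \emph{air} heading $(\cos\psi,\sin\psi)$, so $\lambda_X W_X+\lambda_Y W_Y=-f_0'(V,0)\,(W_X\cos\psi+W_Y\sin\psi)$, i.e.\ minus $f_0'$ times the wind component along the \emph{air} heading. This is not $W_\parallel$: the paper defines $W_\parallel,W_\perp$ relative to the \emph{ground} heading $\psi_g$ (Eqs.~(\ref{eq:crossWindEquality})--(\ref{eq:TailWindEquality})). Writing the air-heading unit vector in ground-track coordinates as $\bigl(\sqrt{V^2-W_\perp^2}/V,\,-W_\perp/V\bigr)$, the correct reduction of $H=0$ is
\[
f_0(V,0)\;=\;f_0'(V,0)\,\frac{\sqrt{V^2-W_\perp^2}}{V}\Bigl(\sqrt{V^2-W_\perp^2}+W_\parallel\Bigr),
\]
which is exactly $f_g'(V)=0$ for $f_g$ in (\ref{eq:ExplicitFvGlideSope}) and, after expanding $f_0$ and $f_0'$, reproduces the sixth-degree speed-to-fly equation (\ref{eq:SpeedToFly}) involving both $W_\parallel$ and $W_\perp$. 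Your simpler relation $f_0(V,0)=f_0'(V,0)\,(V+W_\parallel)$ is valid only when $W_\perp=0$ and would yield the wrong $V_{opt}$ whenever crosswind is present. The qualitative conclusion (constant $V$, fixed heading) survives; only the identification with Eq.~(\ref{eq:Vopt}) needs this correction.
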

These straight path segments will serve as our building blocks for the optimal path in the presence of obstacles.  
The horizontal kinematics of an aircraft flying inside an air-mass with constant velocity and fixed-heading is illustrated in Fig.\ \ref{fig:FreeSapceFlight}. The aircraft must follow the ground track from the current aircraft \hlhl{location} $\mathbf{A}$ in the direction of the ground velocity vector, $\mathbf{V_g}$, subject to the wind vector $\mathbf{W}$  to reach its destination at \hlhl{location} $\mathbf{B}$.
\begin{figure}[H]
	\centering
	\includegraphics[width=6.2cm]{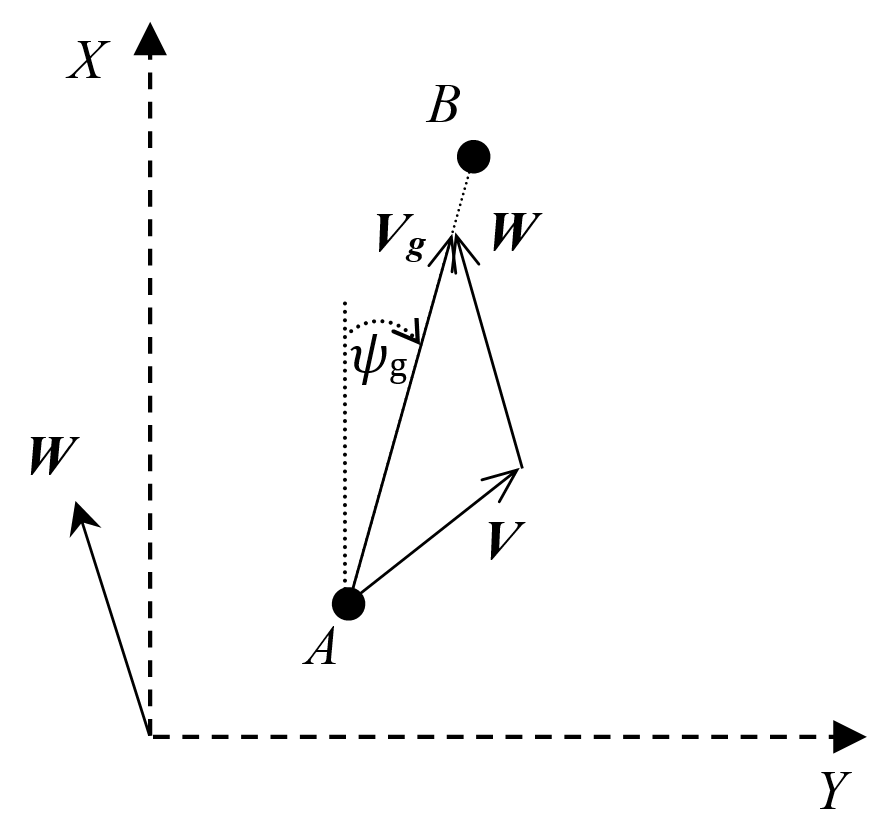}
	\caption{Fixed heading and fixed velocity glide analysis in free space -- top view}
	\label{fig:FreeSapceFlight}
	\par
\end{figure}
In constant-speed and fixed-heading flight, the aircraft speed in the Ground frame, $V_g$, is given by:
\begin{equation}
V_g=||\mathbf{V}+\mathbf{W}||
\end{equation}
In the absence of obstacles, the ALO flight according to Theorem \ref{thrm:Theorem1} is given by the fixed-heading, $\psi_g$, from $\mathbf{A}$ to $\mathbf{B}$. 
 
The wind velocity, $\mathbf{W}$, can be expressed in terms of the "in-plane" component, $W_\parallel$, in the direction of the flight heading in the Ground frame, $\psi_g$, and the "crosswind"  component, $W_\perp$, perpendicular to the flight heading, namely in direction $\psi_g+\frac{\pi}{2}$:
\begin{align}\label{eq:crossWindEquality}
W_\perp = W_\perp(\psi_g)     &=-W_{X}\sin(\psi_g)+W_{Y}\cos(\psi_g)\\
W_\parallel = W_\parallel(\psi_g) &=W_{X}\cos(\psi_g)+W_{Y}\sin(\psi_g)\label{eq:TailWindEquality}
\end{align}
The positive in-plane wind is equivalent to "tailwind", and the negative in-plane wind is equivalent to "headwind". Employing Eqs.\ (\ref{eq:crossWindEquality})-(\ref{eq:TailWindEquality}), the ground velocity, $V_g$, can be expressed as
\begin{equation}\label{eq:ClosingVelocty}
V_g=\sqrt{V^2-W_\perp ^2}+W_\parallel
\end{equation}
As shown in Appendix \ref{subsec:AerodynamicModel}, Eqs.\ (\ref{eq:BasicSinkRateEquation}) and (\ref{eq:ImprovedSinkrateEquation}), the altitude loss rate under these conditions is given by
$ \dot{Z} = f_0(V,0) =V_g f_g(V)$
where $f_g(V)$ is the glide slope function. This function is given explicitly in terms of the glide velocity as
\begin{equation}\label{eq:ExplicitFvGlideSope}
f_g(V) = \frac{K_{SR} \frac{V^4+V_0^4}{V}}{ \sqrt{V^2-W_\perp^2}+W_\parallel }
\end{equation}
where $K_{SR}= \frac{\rho S C_{D0}}{2mg}$ and  $V_0$ denotes the optimal velocity in still air (Equation (\ref{eq:OptimalVelocityinStillAir})).

To obtain the minimum of $f_g(V)$, i.e., the optimal (ALO) glide slope, we numerically solve the following sixth-degree, speed-to-fly equation \cite[Eq.\ (36)]{ref:Segal} 
\begin{equation}\label{eq:SpeedToFly}
V^6- \frac{3}{2} V^4 W_\perp^2+\frac{1}{2} W_\parallel\sqrt{V^2-W_\perp^2} \left( 3 V^4 -V_0^4 \right) -V^2 V_0^4+\frac{1}{2}  W_\perp^2 V_0^4=0
\end{equation}
which has a unique solution for $V \in (V_b,\infty)$, where $V_b = \sqrt{W_\perp^2 + \max(0,-W_\parallel)^2}$ is the minimal speed for which $V_g$ in Eq.\ (\ref{eq:ClosingVelocty}) is positive and well defined.

To satisfy the minimum and maximum velocity constraints, $(V_{stall},V_{max})$, we must limit the ALO glide velocity. Thus, the ALO glide velocity, $V_{opt}$, is given by:
\begin{equation}\label{eq:Vopt}
V_{opt}= V_{opt}(W_\parallel,W_\perp) =\min(\max(V_{stall},V^*),V_{max})
\end{equation}
while $V^*$ is the solution of the speed-to-fly equation (\ref{eq:SpeedToFly}). The ALO glide slope in free space flight in the direction $\psi_g$ is given by $f_g(V_{opt}(W_\parallel(\psi_g),W_\perp(\psi_g)))$.
\section{Solution Concept and Algorithm}\label{sec:SolutionConcept}
In this section, we derive the proposed algorithm to obtain the optimal trajectory in terms of minimal altitude-loss that avoids ground obstacles of general shape, in the presence of possibly intense wind. 
Starting from the initial position and altitude, our algorithm first creates a local 2D obstacle map by calculating the ALO straight-path trajectory to every map coordinate, and obtains the obstacles as those coordinates for which this trajectory is below ground level. Next, we find the set of {\it free tangent points} of these obstacles, which serve as the next vertices to be explored. In fact, we will show that at most two of these points need to be considered for each connected obstacle. We will further show that a fixed heading segment from the current position to one of these vertices must be included in the ALO trajectory. We may now iteratively continue to create a graph composed of such fixed heading segments, until the algorithm explores the destination and finds an ALO trajectory using a graph search algorithm.   

The proposed algorithm bears similarity to the classical 2D shortest path navigation problem which has been extensively explored in the literature, in particular for polygonal obstacles \cite{ref:LaVallePllaningAlg}.
However, in our problem, the so-called visibility road map cannot be calculated directly as the obstacles effectively depend on the current altitude of the aircraft. Therefore, the obstacle map depends on the currently explored vertex. Also, here we consider a non-Euclidean cost function, the altitude loss, which depends on the direction of the flight relative to the wind vector. We therefore justify our construction directly, based on Theorem \ref{thrm:Theorem1} above.

In Subsection \ref{subsec:ObtaclesAboeDTM} we describe the concept of the ALO manifold and define the obstacles and the free space. In Subsection \ref{subsec:TrajOptAlgorithm} we obtain the theoretical results that together with the definition of a feasible ("safe") path above a discrete digital map, Subsection \ref{subsec:ToDiscrete}, enable us to obtain the ALO trajectory optimization algorithm in Subsection \ref{subsecc:TheAloAlg}.

\subsection{The ALO Manifold and Local Obstacle Map}\label{subsec:ObtaclesAboeDTM}
In Subsection \ref{subsec:FreeSpaceGlide} we have identified the ALO glide velocity, $V_{opt}$, which leads to minimal altitude-loss for flight in a given heading $\psi_g$. Calculating the ALO glide slope $f_g(V_{opt})$ enables to obtain the minimal altitude-loss rate in every direction.

To obtain the relevant 2D obstacle map from the current position, it is convenient to first define the {\it ALO manifold}, $\textit{M}=\{\textit{M}(x,y)\}$, which is the cone-like surface of minimal altitude-loss to every displacement $(x,y)$.
Thus, $\textit{M}(x,y)$ is the altitude loss obtained by ALO glide from current \hlhl{position projection in the 2D horizontal plane,} $P$, to some \hlhl{displacement,} $P+(x,y)$, given by
 \begin{align}\label{eq:TheALOManfiold}
  \textit{M}(x,y)& = {||(x,y)||}\cdot{f_g(V_{opt}(W_\parallel(\psi_g),W_\perp(\psi_g)))}\\
 \psi_g &= \arctantwo(y,x)
 \end{align}
Here $f_g(V)$ and $V_{opt}$ are given by Eqs.\ (\ref{eq:ExplicitFvGlideSope}) and (\ref{eq:Vopt}). An example of the ALO manifold for a Cesna 172 model is given in Fig.\ \ref{fig:LandingStripAttainabilityThirdScnarioPaper1}.
\begin{figure}[H]
	\centering
	\includegraphics[width=7.2cm]{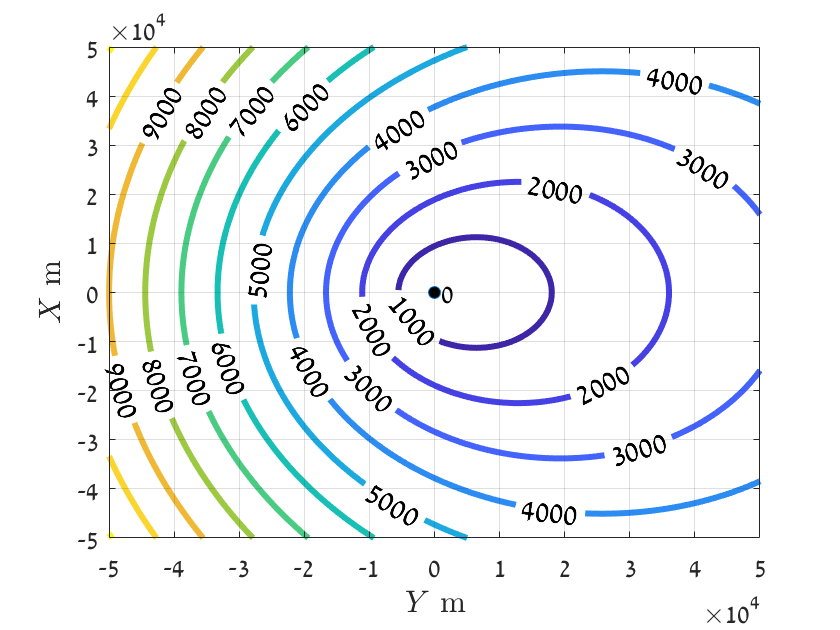}
	\caption{Altitude loss contours of the ALO manifold for a Cesna 172, given a 20 m/sec wind heading east}\label{fig:LandingStripAttainabilityThirdScnarioPaper1}
	\par
\end{figure}
We proceed to define the local obstacle function $\textit{LO}$.
For a given \hlhl{position projection onto the 2D horizontal plane,} $P_A =(X_A,Y_A)$, and altitude $Z_A$, we define the local obstacles function, $\textit{LO}(X,Y;P_A,Z_A)$, as
\begin{equation}\label{eq:LO}
\textit{LO}(X,Y;P_A,Z_A) = -Z_A-\textit{M}(X-X_A,Y-Y_A)-\textit{DTM}(X,Y)
\end{equation}
where the ALO manifold, $\textit{M}(x,y)$, is given by Eq.\ (\ref{eq:TheALOManfiold}). Thus, $\textit{LO}(X,Y;P_A,Z_A)$ is the elevation above ground level due to an ALO straight glide from $P_A$ to $(X,Y)$. An example of the obstacle function is given in Fig.\ \ref{fig:FullReachabilityW20Heading45Alt2000Again}, where we show an intersection between a DTM and an ALO manifold. In this example the aircraft is located at $P_A=(0,0)$ and altitude 2000 m above sea level. The regions where the ground elevation is above the ALO manifold are the local obstacles as viewed from the current position.
\begin{figure}[H]
	\centering
	\includegraphics[width=9.2cm]{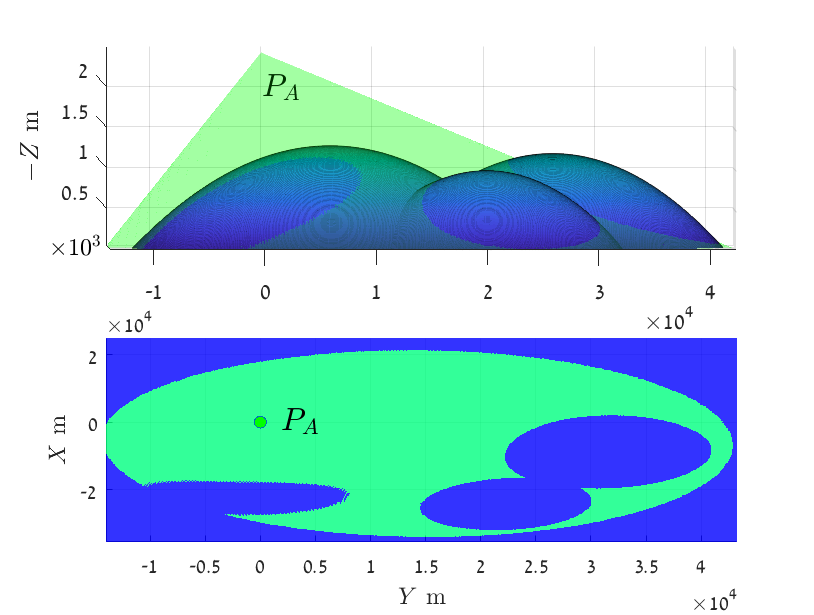}
	\caption{Local obstacle function, $\textit{LO}(X,Y;P_A,Z_A)$,  example}\label{fig:FullReachabilityW20Heading45Alt2000Again}
	\par
\end{figure}

Employing the definition of $\textit{LO}(X,Y;P_A,Z_A)$ we can define the 2D local obstacle map.
\begin{definition}[Obstacles and Free space]\
	
	Given the current position and altitude $(P_A,Z_A)$, define
	\begin{enumerate}[label=(\alph*)]
		\item Free Space $: \text{FREE} = \{(X,Y):\textit{LO}(X,Y;P_A,Z_A) \geq 0\}$
		\item Local Obstacles Set $: \text{OBST}=\{(X,Y):\textit{LO}(X,Y;P_A,Z_A) < 0\}$
		\item A single obstacle, $\mathcal{O}$, is a connected subset of the Local Obstacles Set.
	\end{enumerate} 
\end{definition}

\subsection{Trajectory Optimization -- Optimal Obstacle Avoidance}\label{subsec:TrajOptAlgorithm}
In this section, we lay out the theoretical foundation for the proposed trajectory optimization algorithm. Throughout this section we consider a fixed starting point $(P_A,Z_A)$, which stands for the current position and elevation of the aircraft along the planned trajectory.
In the following we refer to the \hlhl{state projection in the 2D horizontal plane,} $P=(x,y)$, as a \emph{point} or \emph{position}.
\begin{definition}[Convex Combination relative to $P_A$]\label{def:CCP}
	For a given point $P \in R^2$, the convex combination of $P$ to $P_A$ with parameter $\lambda \ge 0$ is $P(\lambda) \triangleq ((1-\lambda) P_A+ \lambda P)$; thus, $P(1)=P$.
\end{definition}
\begin{definition}[Direct Reachability]\label{def:DirectReach}
	The point $P$ is directly reachable from $P_A$ if we have $P(\lambda) \in \text{FREE}$ for all $\lambda \in [0,1]$.
\end{definition}

\begin{definition}[Obstacle Boundary Tangents]\label{def:OBTangents}
	\ds{For a point $P \in R^2$ and an obstacle, $\mathcal{O}$, when $P \in \partial \mathcal{O}$. Let $(x(s),y(s))$ be the parametrization of the curve $\partial \mathcal{O}$ around $P=(x(0),y(0))$.
		Let $(x'_+(0),y'_+(0))$ and $(x'_-(0),y'_-(0))$ be the derivative of the curve $(x(s),y(s))$ w.r.t. $s$ from $s=0^+$ and $s=0^-$ respectfully. Denote the obstacle boundary tangents of $P$ as $P^+(\lambda)$ and $P^-(\lambda)$ when  $P^+(\lambda)=P+\lambda(x'_+(0),y'_+(0)) $ and  $P^-(\lambda)=P+\lambda(x'_-(0),y'_-(0))$.}
\end{definition}

\begin{definition}[FTP -- Free Tangent Point]
\label{def:FTP}
	For a point $P_T \in R^2$, let $P_T(\lambda)$ denote its convex combination to $P_A$ as in Definition \ref{def:CCP}. A point $P_T$ is a free tangent point to an obstacle if 
	\begin{enumerate}[label=(\alph*)]
		 \item $P_T$ is directly reachable from $P_A$;
		 \item $P_T$ is on the boundary of OBST: $P_T \in \partial \text{OBST}$; 
		 \item For some $\varepsilon>0$ small enough and all $\lambda \in (1,1+\varepsilon)$, $P_T(\lambda)$ is directly reachable from $P_A$ and  $P_T(\lambda) \notin \partial OBST$.
		\end{enumerate} 
		\ds{In addition, if $\exists \mathcal{O}$ such that $P_A \in \partial \mathcal{O}$. Let $P_A^+(\lambda)$ and $P_A^-(\lambda)$ be the obstacle boundary tangents of $P_A$ via Definition} \ref{def:OBTangents}. The point $P_T$ is an FTP if:
		 \begin{enumerate}[label=(\alph*)]
		 \item  
		 \ds{There exists $\varepsilon_1>0$ such that for all $\lambda \in (0,\varepsilon_1)$ we have $P_A^+(\lambda) \in FREE$.} 
		 \item \ds{$P_T$ is directly reachable from $P_A$}
		 \item \ds{$P_T$ equals $P_A^+\left( \max(\inf_{\lambda>0} \{ \lambda: P_A^+(\lambda) \notin \partial OBST\},\varepsilon_2) \right)$, for some $\varepsilon_2>0$ small enough}
		 \end{enumerate}
		 Or if:
		 \begin{enumerate}[label=(\alph*)]
		 \item  
		 \ds{There exists $\varepsilon_1>0$ such that for all $\lambda \in (-\varepsilon_1,0)$ we have $P_A^-(\lambda) \in FREE$.}
    		 \item \ds{$P_T$ is directly reachable from $P_A$}
		     \item \ds{$P_T$ equals $P_A^- \left( \min(\sup_{\lambda<0} \{ \lambda: P_A^-(\lambda) \notin \partial OBST\},-\varepsilon_2) \right)$, for some $\varepsilon_2>0$ small enough}
		 \end{enumerate}
		 
\end{definition}

As illustrated in Fig.\ \ref{fig:TangentDefinitionExample}, the FTP on the left is located at the end of the obstacle rim, in accordance with the definition of an FTP which requires that $P(\lambda) \notin \partial OBST$ for $\lambda \in (1,1+\varepsilon)$.
\begin{figure}[H]
	\centering
	\includegraphics[width=5.6cm]{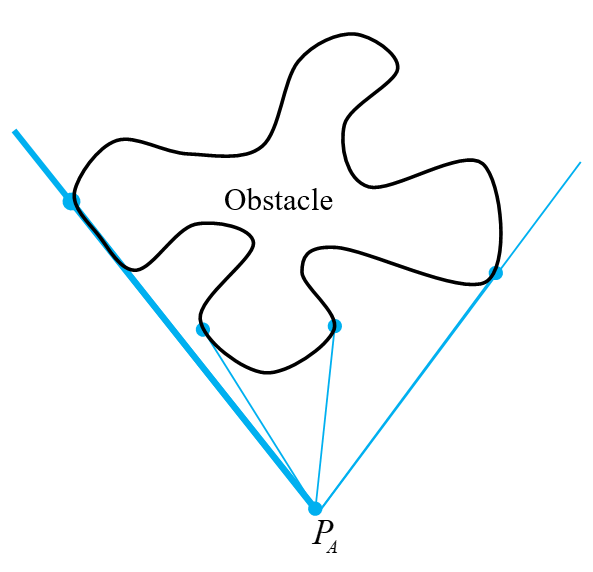}
	\caption{FTP Example: An obstacle with four FTPs from $P_A$}\label{fig:TangentDefinitionExample}
	\par
\end{figure}
In Fig.\ \ref{fig:PAonOBST} \ds{the starting points $P_{A1}$ and $P_{A2}$ reside on the boundary of an obstacle. In this example $P_{A1}$ has only two FTPs and $P_{A2}$ has three FTPs. Note that the obstacle boundary tangents of $P_{A1}$ are inside the obstacle; namely, there is no $\varepsilon_1>0$ such that $P_{A1}^+(\lambda) \in FREE$ for all $\lambda \in (0,\varepsilon_1)$ and $P_{A1}^-(\lambda) \in FREE$ for all $\lambda \in (-\varepsilon_1,0)$. However, for $P_{A2}$ the obstacle boundary tangents are outside the obstacle which results in two additional FTPs, $P_{T-}$ and $P_{T+}$. From each FTP the aircraft will require to advance along the obstacle boundary in an iterative manner until $P_B$ becomes reachable or another obstacle becomes induced by the descent.}
\begin{figure}[H]
	\centering
	\includegraphics[width=5.6cm]{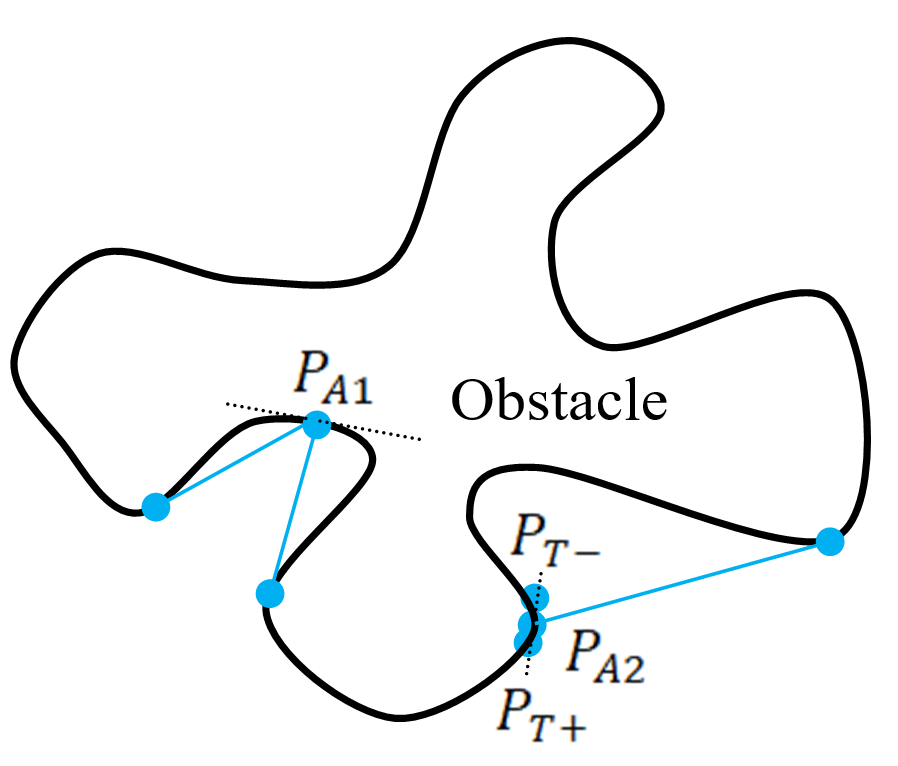}
	\caption{FTP Example: the starting point $P_A$ is on the obstacle boundary}\label{fig:PAonOBST}
	\par
\end{figure}

Consider now an ALO trajectory from $(P_A,Z_A)$ to $P_B$. If $P_B$ is directly reachable from $P_A$, then clearly, by Theorem \ref{thrm:Theorem1}, the ALO trajectory is simply a fixed heading glide to $P_B$. Otherwise, if an obstacle stands in our way, the following holds.
\begin{theorem}\label{thrm:TangentToObstacle}
	Suppose $P_B$ is not directly reachable from $P_A$ and that the set of FTP points from $P_A$ is finite. Then any ALO trajectory from $(P_A,Z_A)$ to $P_B$ must include a fixed-heading glide segment from $P_A$ to an FTP.
\end{theorem}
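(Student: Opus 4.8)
The plan is to argue by contradiction, combining the optimality of straight fixed-heading glides from Theorem~\ref{thrm:Theorem1} with a local exchange argument at the obstacle boundary. Suppose we are given an ALO trajectory $\tau$ from $(P_A,Z_A)$ to $P_B$, and that $P_B$ is not directly reachable from $P_A$. Since $\tau$ is feasible, it must stay in FREE throughout; since $P_B$ is not directly reachable, $\tau$ cannot be the single straight glide to $P_B$, so it must deviate from the straight line $P_A P_B$. First I would show that $\tau$ departs $P_A$ along some initial heading and, because the straight continuation of that heading cannot reach $P_B$ without entering OBST (or $\tau$ would not be optimal by Theorem~\ref{thrm:Theorem1}), $\tau$ must first make contact with $\partial\text{OBST}$ at some point. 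I would define $P_T$ as the \emph{first} point along $\tau$ (in the trajectory's own parametrization) at which $\tau$ touches $\partial\text{OBST}$, i.e.\ the boundary point reached after the maximal directly-reachable initial portion.

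The second step is to establish that the initial segment of $\tau$ up to $P_T$ must in fact be a straight fixed-heading glide, and that $P_T$ qualifies as an FTP. For straightness: the sub-path from $P_A$ to $P_T$ lies entirely in FREE (by the choice of $P_T$ as the first boundary contact, together with Definition~\ref{def:DirectReach}), so it is an obstacle-free sub-problem between two fixed points; by Theorem~\ref{thrm:Theorem1} the altitude-loss-optimal way to connect them is the straight fixed-heading glide, so replacing the sub-path by that straight glide can only decrease (weakly) the altitude loss while keeping feasibility. Since $\tau$ is optimal, the sub-path must already be this straight glide, hence $P_T$ is directly reachable from $P_A$ and condition~(a) of Definition~\ref{def:FTP} holds. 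Condition~(b), $P_T\in\partial\text{OBST}$, is immediate from the definition of $P_T$. For condition~(c), I would argue that if the straight ray from $P_A$ through $P_T$ re-entered the obstacle interior for $\lambda$ just past $1$ (i.e.\ $P_T$ is not at a rim where the ray becomes free), then $P_T$ is an interior tangency where the glide grazes the contour; the optimal trajectory would not touch the boundary at such a non-extreme point, because a small straight-line shortcut past that grazing point stays in FREE and strictly reduces altitude loss, contradicting optimality. Thus $P_T(\lambda)$ must leave $\partial\text{OBST}$ and be directly reachable for $\lambda\in(1,1+\varepsilon)$, giving condition~(c).

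The main obstacle, and the step I expect to require the most care, is handling the boundary case where $P_A$ itself lies on $\partial\mathcal{O}$ for some obstacle, which is exactly why Definition~\ref{def:FTP} carries the extra alternative characterization via obstacle boundary tangents $P_A^+(\lambda)$, $P_A^-(\lambda)$. When the aircraft starts on a contour, the feasible initial headings are constrained to a half-space-like cone determined by the local tangent directions, and the ``first contact'' point $P_T$ may be reached by sliding along the contour rather than by a clean straight departure into free space. Here I would need to verify that the optimal initial heading coincides with one of the tangent directions $P_A^{\pm}$, matching the $\inf/\sup$ constructions in the secondary part of Definition~\ref{def:FTP}, and that the $\varepsilon_2$ safeguards correctly capture the degenerate situations. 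A subtle point is ensuring the argument is genuinely local and does not implicitly assume convex or polygonal obstacles; I would rely only on the one-sided derivatives of the boundary parametrization from Definition~\ref{def:OBTangents}, so that the proof applies to general-shaped terrain obstacles.

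Finally, I would note that the finiteness assumption on the FTP set is used only to guarantee that the first-contact point $P_T$ is well defined as an actual FTP rather than an accumulation point of boundary contacts; with finitely many FTPs, the maximal directly-reachable prefix of $\tau$ terminates at one of them, closing the argument.
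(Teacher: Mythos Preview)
Your first-contact argument has a genuine gap at the step where you replace the initial sub-path of $\tau$ from $P_A$ to $P_T$ by the straight glide. You justify feasibility by noting that the sub-path lies in FREE, but that only tells you the \emph{curved path} avoids OBST; it says nothing about whether the \emph{chord} $\overline{P_A P_T}$ avoids OBST. Concretely, take a crescent-shaped obstacle with $P_A$ facing its convex side: a feasible trajectory can swing wide around one horn (staying strictly in the interior of FREE) and make its first contact with $\partial\text{OBST}$ on the concave inner arc. That first contact point $P_T$ is \emph{not} directly reachable from $P_A$, because the segment $\overline{P_A P_T}$ pierces the body of the crescent. So condition~(a) of Definition~\ref{def:FTP} fails, your replacement step produces an infeasible path, and $P_T$ is not an FTP at all. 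The same picture shows that your preliminary claim ``$\tau$ must first make contact with $\partial\text{OBST}$'' is unsupported: a feasible (even optimal) trajectory may never touch $\partial\text{OBST}$, so the object $P_T$ you build the whole proof around may simply not exist.

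The paper's proof sidesteps both issues by working not with the first contact of $\tau$ with $\partial\text{OBST}$, but with the partition of FREE into the directly-reachable set $F_1$ and its complement $F_2$. The key geometric step is to identify the interface $\partial F_1 \cap F_2$ inside FREE as precisely the finite collection of ``gate'' segments $E_i$, each obtained by extending the ray $P_A P_i$ past the FTP $P_i$ until it next meets $\partial\text{OBST}$. Since $P_A\in F_1$ and $P_B\in F_2$, any feasible trajectory must cross some $E_i$ at a point $e_i$; by construction $e_i$ is directly reachable and collinear with $P_A$ and $P_i$, so the (unique) ALO straight glide from $P_A$ to $e_i$ already passes through $P_i$. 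This topological crossing argument is exactly what your local exchange lacks, and it is also where the finiteness hypothesis is actually used: it guarantees finitely many gates $E_i$ that genuinely separate $F_1$ from $F_2$, rather than (as you suggest) merely ruling out accumulation of boundary contacts.
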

\begin{proof}
Denote set of FTPs by $\{P_{i} \}_{i=1}^N$, and let $P_{i}(\lambda)$ be the convex combination of $P_{i}$ to $P_A$, as per Definition \ref{def:CCP}. Extend the line segment from $P_A$ to $P_{i}$ until it touches the next obstacle, and denote the additional segment by $E_i$ (see Fig.\ \ref{fig:DRR_Boundary}). Formally, let 
$$
\lambda^*_i 
= \inf_{\lambda>1}\{\lambda : P_{i}(\lambda)\in \partial OBST \}
$$  
and
$$ 
E_i = \{P_{i}(\lambda) : 1\leq \lambda \leq \lambda_i^* \} 
$$

Let us divide the entire free space $\text{FREE}$ into two sets: The set $\text{F}_1$ of directly reachable points from $P_A$ (Definition \ref{def:DirectReach}), and the set $\text{F}_2$ of \textit{potentially reachable} points, namely, those points in FREE which are not directly reachable from $P_A$.
Then, as illustrated in Fig.\ \ref{fig:DRR_Boundary}, the segments 
$\{E_i \}_{i=1}^N$ serve as the boundary between the directly reachable set $\text{F}_1$ and the potentially reachable set $\text{F}_2$.

Evidently $P_A$ is in the reachable set (from itself), while $P_B$ is not by the Theorem assumption. Therefore, the ALO trajectory (like any feasible trajectory) must cross at least one of these segments $E_i$ at some \hlhl{point} $e_i=P_{i}(\lambda_i)$, where $1\leq \lambda_i \leq\lambda_i^*$.

As $e_i$ is directly reachable, the ALO trajectory to $e_i$ is the fixed heading trajectory, which by definition of $E_i$ must pass though $P_{i}$.  
\end{proof}
\begin{figure}[H]
	\centering
	\includegraphics[width=5.2cm]{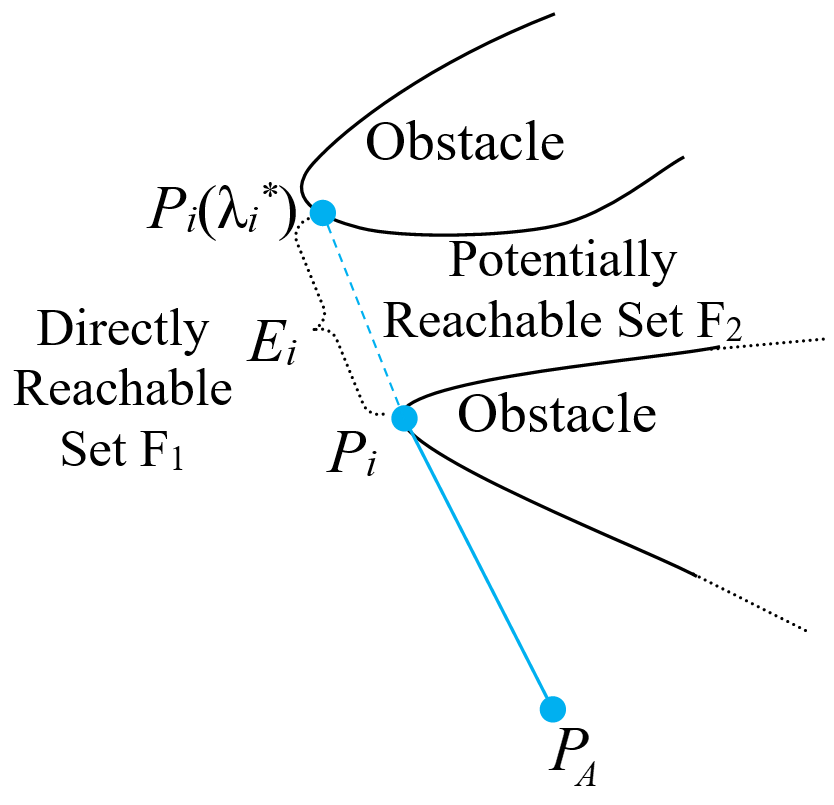}
	\caption{The boundary between the directly reachable and potentially reachable sets}\label{fig:DRR_Boundary}
	\par
\end{figure}
As illustrated in Fig.\ \ref{fig:AltitudeLossOptimalPathThroughObstacle}, to reach $P_B$, trajectories 1, 2 and 3 pass through points $\{P_{i}(\lambda_i)\}_{i=1}^3$ respectively. As they do not include the fixed heading trajectories from $P_A$ to $P_{i}$ they are sub-optimal.
\begin{figure}[H]
	\centering
	\includegraphics[width=8.2cm]{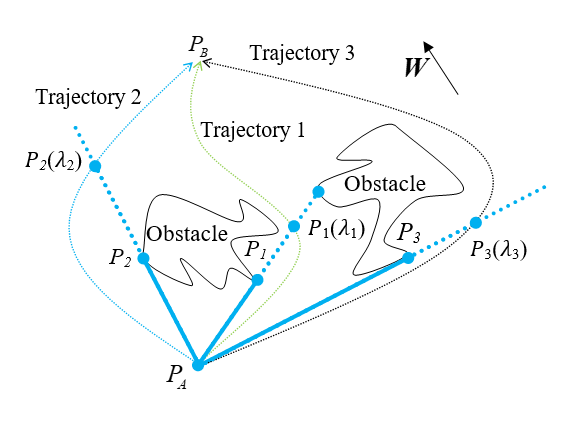}
	\caption{Three candidate trajectories for the ALO paths illustrated on a local obstacle map}\label{fig:AltitudeLossOptimalPathThroughObstacle}
	\par
\end{figure}
Based on Theorem \ref{thrm:TangentToObstacle}, we can derive an iterative graph search algorithm to obtain the ALO trajectory. A concrete algorithm, based on the standard A$^*$ search scheme, is presented in Subsection \ref{subsecc:TheAloAlg}. 
Essentially, the algorithm starts at the initial point $(P_A,Z_A)$, \emph{expands} this point by constructing the local obstacle map from this position, and finding the respective FTPs which serve as the successor nodes in the search graph. The next node to be expanded is chosen by the key or ranking function of the A$^*$ algorithm, and the process continues iteratively until an optimal path to the target is found and verified.  

The computational complexity of the outlined procedure clearly depends on the number of FTPs that need to be explored per obstacle. We proceed to show that this number can be reduced to two, even for non-convex obstacles. 

Let us first relate to the case in which the destination, $P_B$, is outside the convex hull of an obstacle. In this case, it should be intuitively clear that we can explore only the two most extreme FTPs and not the entire FTP set, as illustrated in Fig.\ \ref{fig:ConvexHullExample}. 
\begin{figure}[H]
	\centering
	\includegraphics[width=5.2cm]{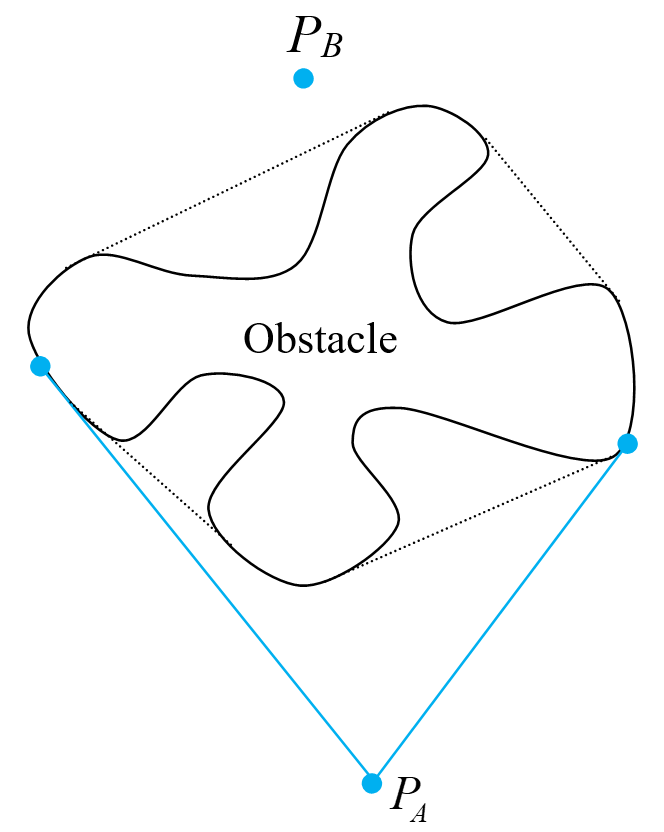}
	\caption{Illustration of a destination outside the obstacle convex hull}\label{fig:ConvexHullExample}
	\par
\end{figure}
The latter observation indeed follows as a special case of Theorem \ref{thrm:TangentCountToObstacle} below. A less immediate case is when the target is within the convex hull of an obstacle, as illustrated in Fig.\ \ref{fig:FtpCountIllustration}. Our statement requires the following definition.
\begin{definition}[Essential FTPs]
\label{def:EssentialFTP}
	Consider a point $P_A$ and its FTP set $\{P_{i} \}_{i=1}^{N}$ with respect to an obstacle $\mathcal{O}$. Suppose that $\{P_{i} \}_{i=1}^{N}$ is arranged in a monotonously increasing order in terms of heading from $P_A$ to $P_{i}$. 
	
	Let $C_i$, $i=1...N$, be the area enclosed between $P_i$, $\partial \mathcal{O}$ and $P_{i+1}$ (see Fig.\ \ref{fig:FtpCountIllustration}). Identify $P_{N+1}$ with $P_1$. 
	The essential FTPs of $\mathcal{O}$ with respect to $P_A$ are the pair $\{P_{j}, P_{j+1}\}$ such that $P_B \in C_j$. 
	\end{definition}
	For example, in Fig.\ \ref{fig:FtpCountIllustration} the essential FTPs are $\{P_1,P_2\}$.
\begin{theorem}\label{thrm:TangentCountToObstacle}
Suppose $P_B$ is not directly reachable from $P_A$. Then any ALO trajectory from $(P_A,Z_A)$ to $P_B$ must include one of the fixed heading glide segments from $P_A$ to the two essential FTPs.
\end{theorem}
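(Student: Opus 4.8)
The plan is to combine Theorem \ref{thrm:TangentToObstacle} with an exchange (cut‑and‑replace) argument, using the region partition induced by the FTPs. First I would recall the geometry from the proof of Theorem \ref{thrm:TangentToObstacle}: extending each ray from $P_A$ through an FTP $P_i$ up to the next obstacle gives segments $E_i$ that form the boundary between the directly reachable set and the potentially reachable shadow of $\mathcal{O}$. In the angular ordering of Definition \ref{def:EssentialFTP}, these segments together with the obstacle arcs cut the shadow into the regions $C_1,\dots,C_N$, so that each $C_i$ is bounded by $E_i$, $E_{i+1}$ and the arc of $\partial\mathcal{O}$ from $P_i$ to $P_{i+1}$. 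Since $P_B\in C_j$ by hypothesis and the arc of $\partial\mathcal{O}$ is impassable, the only portions of $\partial C_j$ lying in FREE are $E_j$ and $E_{j+1}$; hence \emph{any} feasible trajectory reaching $P_B$ (which starts at $P_A\notin C_j$) must enter $C_j$ through $E_j$ or $E_{j+1}$, i.e.\ through a ray carried by one of the two essential FTPs $P_j,P_{j+1}$.

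Next I would run the exchange argument on an arbitrary ALO trajectory. Let $e$ be the \emph{last} point at which the trajectory meets $E_j\cup E_{j+1}$ before terminating at $P_B$; say $e\in E_m$ with $m\in\{j,j+1\}$, so that $e=P_m(\lambda)$ for some $\lambda\ge 1$ and the straight segment $P_A\to e$ passes through $P_m=P_m(1)$. The key local fact is that every point of $E_m$ strictly before the next obstacle is directly reachable: along the ray through $P_m$ the ALO heading is constant (the ray is collinear with $P_A$), so the altitude of the single fixed‑heading glide to $e$ coincides at each interior point $P_m(\mu)$ with $M(P_m(\mu)-P_A)$; consequently membership in FREE along the ray is exactly feasibility of that one straight glide. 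Thus the direct fixed‑heading glide $P_A\to e$ (through $P_m$) is feasible, and by Theorem \ref{thrm:Theorem1} its altitude loss equals the free‑space optimum $M(e-P_A)$.

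I would then close with an optimality comparison. The initial portion of the given ALO trajectory is itself a feasible glide from $P_A$ to $e$, hence its altitude loss is at least the unconstrained minimum $M(e-P_A)$ achieved by the direct glide. Replacing that initial portion by the direct glide therefore does not increase the total cost; moreover, since the direct glide reaches $e$ at an altitude no lower than the original, and the sink rate is altitude‑independent under the constant‑air‑density Assumption \ref{asm:NoTurns}(d), the remaining segment from $e$ to $P_B$ stays above terrain and incurs the same loss. Optimality of the original trajectory forces the two initial losses to be equal, so the initial portion is \emph{itself} an ALO glide to $e$; by Theorem \ref{thrm:Theorem1} this glide is the straight, fixed‑heading segment, which passes through $P_m$. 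Hence every ALO trajectory includes a fixed‑heading segment from $P_A$ to one of the essential FTPs $P_j,P_{j+1}$, and the case $P_B$ outside the convex hull (essential pair $\{P_N,P_1\}$) is the special instance $P_B\in C_N$.

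The main obstacle is making the splice rigorous rather than geometric hand‑waving: specifically (i) verifying that $e$ is genuinely directly reachable and that the partial glide to $e$ realizes $M(e-P_A)$, which rests on the collinearity observation equating "directly reachable" with single‑glide feasibility (Definition \ref{def:DirectReach}); and (ii) confirming that raising the altitude at the splice point $e$ preserves feasibility of the continuation, which relies essentially on altitude‑independence of the sink rate. A secondary bookkeeping point, which I would address but expect not to affect the core argument, is the multi‑obstacle case: one must identify $\mathcal{O}$ as the obstacle whose region $C_j$ contains $P_B$ and note that the essential FTPs remain directly reachable (being FTPs) even when nearer obstacles partially occlude $\mathcal{O}$, so the entry‑into‑$C_j$ and exchange steps go through verbatim.
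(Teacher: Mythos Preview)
Your proof is correct and follows essentially the same cut-and-replace idea as the paper: identify a crossing point on a ray through an essential FTP and replace the initial arc by the straight glide. The paper's version is organized slightly differently---it first invokes Theorem~\ref{thrm:TangentToObstacle} to conclude the ALO trajectory contains a straight segment to \emph{some} FTP $P_k$, and then argues that if $P_k$ is non-essential the continuation to $P_B\in C_j$ must recross the segment from $P_A$ to $P_j$ or $P_{j+1}$ at a point $Q$, whence $P_A\!\to\!P_k\!\to\!Q$ is non-straight and hence non-ALO by Theorem~\ref{thrm:Theorem1}. Your route bypasses Theorem~\ref{thrm:TangentToObstacle} entirely by observing directly that $\partial C_j\cap\text{FREE}\subset E_j\cup E_{j+1}$, and you are more explicit than the paper about why the spliced trajectory remains feasible (altitude-independence of the sink rate under Assumption~\ref{asm:NoTurns}(d)). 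Both arguments are the same exchange at heart; yours is marginally more self-contained, the paper's is terser.
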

\begin{proof}
	 Let us observe the essential FTPs $\{P_{j}, P_{j+1}\}$ as illustrated in Fig.\ \ref{fig:FtpCountIllustration}. According to Theorem \ref{thrm:TangentToObstacle} the ALO trajectory must include a fixed heading segment from $P_A$ to some FTP. Any trajectory from $P_A$ to an FTP, $P_{k}$ which is not an essential FTP, should cross the fixed heading segments from $P_A$ to $P_{j}$ or from $P_A$ to $P_{j+1}$ at some \hlhl{point} $Q$. The trajectory from $P_A$ through $P_{k}$ to $Q$, unlike the trajectory from $P_A$ directly to $Q$, is not a fixed heading trajectory, and thus via Theorem \ref{thrm:Theorem1} it is not ALO.
	 \end{proof}
\begin{figure}[H]
	\centering
	\includegraphics[width=6.2cm]{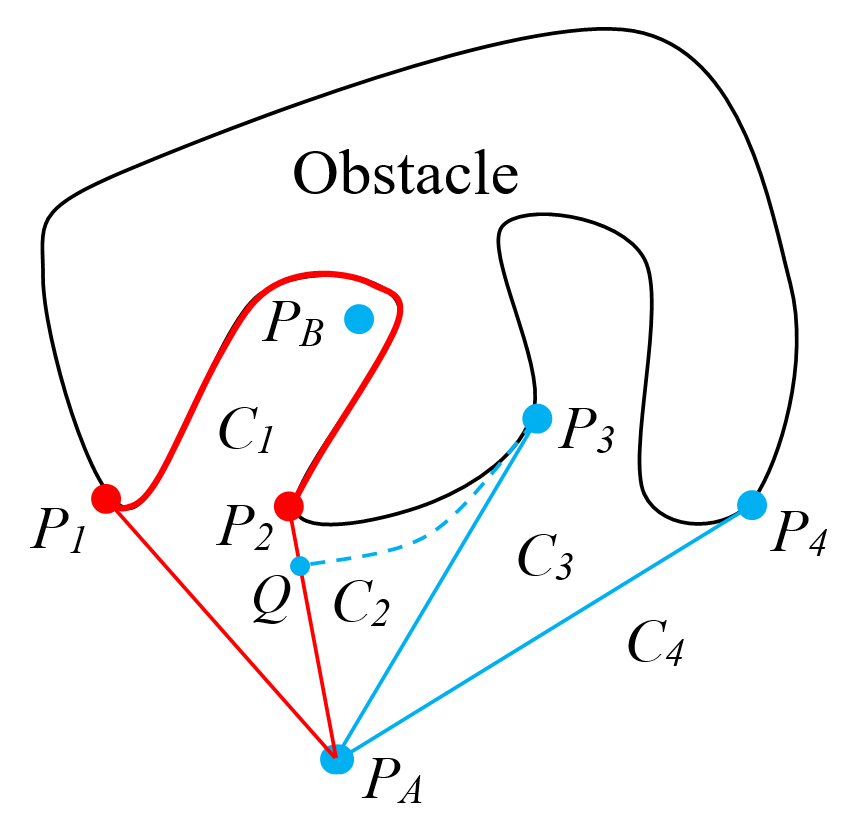}
	\caption{Essential FTPs illustration of the four FTPs, $P_1$ and $P_2$ are the essential pair}\label{fig:FtpCountIllustration}
	\par
\end{figure}

\subsection{From Continuous to Discrete Terrain Map}\label{subsec:ToDiscrete}
In this subsection, we assume that the terrain map is as a DTM given over a discrete grid, hence the obstacles set is obtained at each stage over that grid. We propose two alternative approaches to perform interpolation of the discrete obstacles set to the continuous domain. Each alternative defines what is a feasible ("safe") path. The DTM is represented as an $M \text{ by } N$ matrix with discrete samples of the elevation data in a resolution of $\Delta X$ and $\Delta Y$ m. To obtain the discrete local obstacle map, we sample the local obstacles function, $\textit{LO}$, for all $1 \leq m \leq M,1 \leq n \leq N$ as the matrix:
\begin{equation}\label{eq:DiscriteLocalObstacle}
\textit{LO}[m,n] = \textit{LO}(X_{I0}+(m-1)\Delta X,Y_{I0}+(n-1)\Delta Y;P_A,Z_A)
\end{equation}
where $\textit{LO}(X,Y;P_A,Z_A)$ is given by Eq.\ (\ref{eq:LO}), and $(X_{I0},Y_{I0})$ is the \hlhl{point} of the elevation sample $\textit{DTM}[1,1]$. Due to the discretization the boundary of the obstacles $\textit{LO}(X,Y;P_A,Z_A)=0$ is not generally sampled; therefore, we provide the following two interpolation schemes.

\emph{Approach 1.\ Linear approximation over a triangulation:}
in this approach, we form a standard triangulation by dividing each square into two triangles as illustrated in Fig.\ \ref{fig:Trangulation}. The obstacle boundaries are then identified using linear interpolation over each square. This will give polygonal obstacles, whose vertices lie on the sides of the triangles (at most two per triangle). Since only vertices can be FTPs, we can construct a finite algorithm that implements the exact (continuous) one by employing Theorem \ref{thrm:TangentToObstacle} and Theorem \ref{thrm:TangentCountToObstacle}.
\begin{figure}[H]
	\centering
	\includegraphics[width=7.2cm]{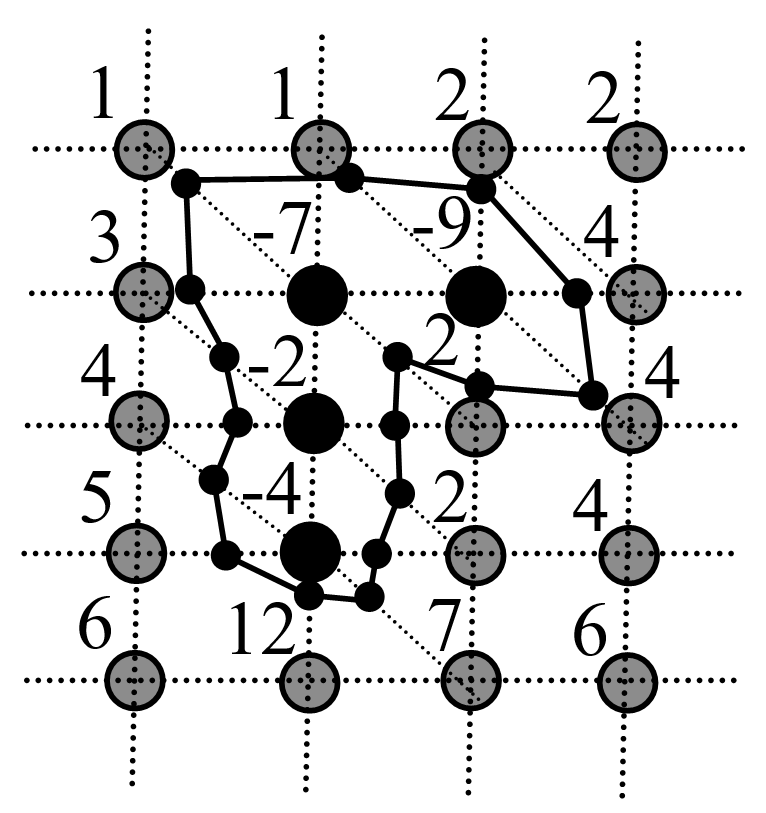}
	\caption{Local obstacle map for the linear approximation approach}\label{fig:Trangulation}
	\par
\end{figure}

\emph{Approach 2. Safe squares:}
This approach is somewhat more conservative.
Define a safe square as one for which all its vertices maintain $\textit{LO}[m,n] \ge 0$,  as illustrated in Fig.\ \ref{fig:SafeSquares}.
A square is \emph{unsafe} if at least one of its vertices has $\textit{LO}[m,n] < 0$. 
The obstacles set is defined as the union of the unsafe squares, while FREE is the complement of the obstacles set. We include in FREE its boundary.
A feasible (safe) path must be in FREE.
\begin{figure}[H]
	\centering
	\includegraphics[width=7.2cm]{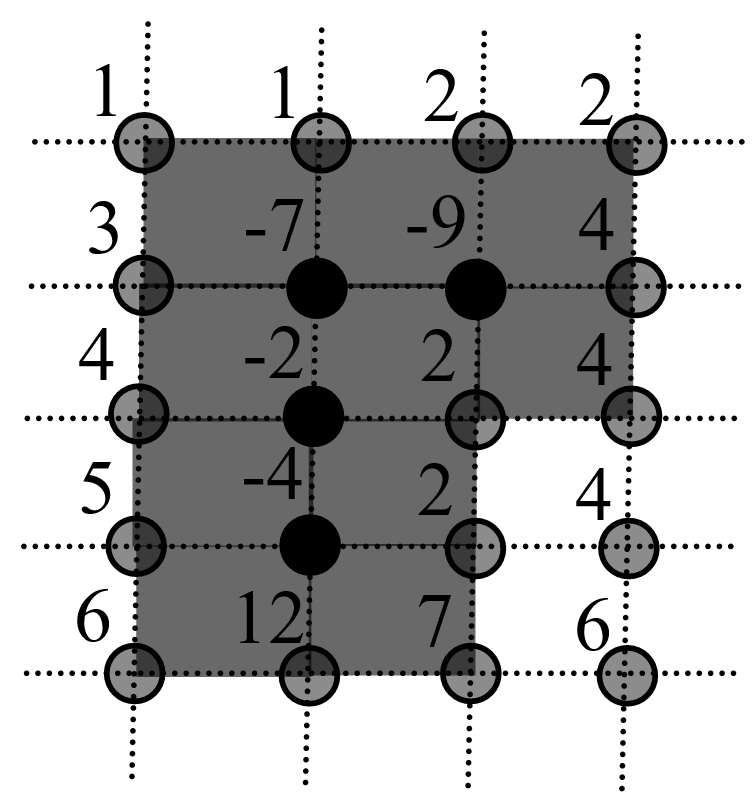}
	\caption{Local obstacle map for the safe squares approach}\label{fig:SafeSquares}
	\par
\end{figure}
In this approach the formed obstacles (the complement of FREE) are polygonal with vertices on the grid points; thus, the FTPs are also on grid points only. 

An FTP point, via its definition, should be directly reachable from the current position. Evaluating direct reachability with approach 2 is simpler, as the obstacles encompass entire squares only; therefore, the safe path should not intersect squares defined as obstacles. 
As this approach allows for a simpler implementation, we henceforth focus on it in the description of our algorithm.

Having defined the local obstacle map as polygonal sets in continuous space we can apply the results of Subsection \ref{subsec:TrajOptAlgorithm} to obtain a finite algorithm to find the FTPs at each stage of the overall algorithm.

\subsection{The ALO Trajectory Planning Algorithm}
\label{subsecc:TheAloAlg}
In this subsection, we first define the graph on which a trajectory search algorithm can find the ALO trajectory. Then, we choose the A$^*$ shortest path algorithm to obtain the ALO trajectory due to its efficiency property. In the sequel, we derive the algorithm pseudocode. 

Let us create the graph, $G=(V_r,E)$, with vertices, $V_r$, and edges, $E$. The graph is created by employing an iterative algorithm. Each vertex $V_i=(P_i,d_i) \in V_r$ consists of the location point $P_i$ and the altitude loss, $d_i$, on a trajectory to $V_i$. The algorithm initializes the graph and starts exploration from the engine cutoff vertex $V_A=(P_A,0)$ and continues through intermediate FTP vertices. At each explored node $V_i=(P_i,d_i)$ we add the vertex $V_j=(P_B,d_j)$ if $P_B$ is directly reachable from $V_i$ otherwise we add, $\{V_j=(P_j,d_j)\}$, the set of FTPs from $V_i$, to the graph vertices set, $V_r$. Also, we add the all of the edges $\{(V_i,V_j)\}$ to the edges set $E$. In case $V_j$ is an FTP we add it to the set of nodes to be explored in the next iterations. The algorithm continues until the set of nodes to be explored is empty.

Now let us show that the ALO trajectory can be obtained from the graph, $G=(V_r,E)$.
\begin{theorem}\label{thrm:ALO_trajectoryExistence}
The ALO trajectory from position $(P_A,Z_A)$ to point $P_B$ can be obtained from the graph $G=(V_r,E)$.
\end{theorem}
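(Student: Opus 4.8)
The plan is to prove the statement by induction on the number of fixed-heading segments that make up the ALO trajectory, using Theorems \ref{thrm:TangentToObstacle} and \ref{thrm:TangentCountToObstacle} to certify that every turning point of the trajectory is one of the two essential FTPs that the algorithm adds to $G$ at the corresponding expansion step, and using a principle-of-optimality argument to reduce the tail of the trajectory to a fresh instance of the same problem.

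First I would decompose the ALO trajectory into fixed-heading glide segments. Consider the ALO trajectory from $(P_A,Z_A)$ to $P_B$. If $P_B$ is directly reachable from $P_A$, then by Theorem \ref{thrm:Theorem1} the trajectory is a single fixed-heading glide, which is exactly the edge $(V_A,V_B)$ added when $V_A$ is expanded, and we are done. Otherwise, by Theorems \ref{thrm:TangentToObstacle} and \ref{thrm:TangentCountToObstacle}, the trajectory must begin with a fixed-heading segment from $P_A$ to one of the two essential FTPs, say $P_{i_1}$; this is precisely one of the successor edges created when $V_A$ is expanded, so the vertex $V_{i_1}=(P_{i_1},d_{i_1})$, with $d_{i_1}$ the altitude loss of that segment, belongs to $V_r$ and the edge $(V_A,V_{i_1})$ belongs to $E$.

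Next I would invoke the principle of optimality to peel off this first segment. Because the altitude-loss cost (\ref{eq:CostFunctionEquation}) is additive along the path and the obstacle constraint $-Z>\textit{DTM}(X,Y)$ is \emph{memoryless} -- the local obstacle map $\textit{LO}(\cdot\,;P_{i_1},Z_{i_1})$ depends only on the current state $(P_{i_1},Z_{i_1})$ with $Z_{i_1}=Z_A-d_{i_1}$, and not on the history -- the remaining portion of the trajectory, from $P_{i_1}$ to $P_B$, is itself an ALO trajectory for the subproblem (\ref{eq:OptimizationProblem}) starting at $(P_{i_1},Z_{i_1})$. Indeed, were there a strictly cheaper feasible tail from $(P_{i_1},Z_{i_1})$ to $P_B$, splicing it onto the first segment would yield a feasible trajectory from $(P_A,Z_A)$ to $P_B$ of smaller total altitude loss, contradicting optimality. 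Applying Theorems \ref{thrm:TangentToObstacle} and \ref{thrm:TangentCountToObstacle} to this subproblem then shows that the tail either reaches $P_B$ directly (edge $(V_{i_1},V_B)$) or continues with a fixed-heading segment to an essential FTP $P_{i_2}$, which is exactly a successor edge created when the algorithm expands $V_{i_1}$.

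Iterating this construction yields a sequence of vertices $V_A\to V_{i_1}\to V_{i_2}\to\cdots\to V_B$, each consecutive pair joined by an edge of $G$, so the ALO trajectory is realized as a path in $G$. The main obstacle is to guarantee that this sequence terminates, i.e., that the ALO trajectory consists of finitely many segments. I expect this to follow from two facts: (i) every nondegenerate fixed-heading segment incurs strictly positive altitude loss, since $f_g(V_{opt})>0$, so the same position is never revisited and no cycles occur; and (ii) under the discrete, polygonal obstacle model of Subsection \ref{subsec:ToDiscrete} (Approach 2) the FTPs lie on grid points, so at each stage the candidate turning points form a finite set and a taut optimal path turns only at such vertices. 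Combining the no-cycle property with this finiteness -- equivalently, invoking the finiteness hypothesis already made in Theorem \ref{thrm:TangentToObstacle} -- bounds the number of expansions and forces the iteration to reach $V_B$. Making this termination argument fully rigorous, in particular ruling out an infinite sequence of ever-shorter segments accumulating at an obstacle feature, is the delicate point I would treat most carefully.
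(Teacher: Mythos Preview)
Your proposal is correct and follows essentially the same inductive scheme as the paper: start at $V_A$, use Theorem~\ref{thrm:TangentToObstacle} (the paper does not actually invoke Theorem~\ref{thrm:TangentCountToObstacle} here, since the graph $G$ as described contains \emph{all} FTPs, not only the essential pair) to identify the first segment as an edge of $G$, and then repeat from the new vertex. Your treatment is in fact more careful than the paper's on two points the paper leaves implicit or defers: you make the principle-of-optimality reduction explicit (the paper simply re-applies Theorem~\ref{thrm:TangentToObstacle} at $V_i$ without justifying that the tail is itself ALO), and you flag termination, which the paper handles separately via Proposition~\ref{prp:finiteGraph} rather than inside this proof.
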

\begin{proof}
Let us prove by induction the existence in $G$ of the ALO trajectory vertices and edges. The first ALO trajectory vertex, $V_{i=1}$, is $V_A=(P_A,0)$; it exists in the graph since the graph build algorithm initializes $G$ with $V_A$. At the first iteration, the vertex $V_{i=1}=V_A$ is explored. In case $P_B$ is directly reachable from $V_{i=1}$, than $V_{i=2}$ is $(P_B,d_2)$; thus, the trajectory from $P_A$ directly to $P_B$ exists in the graph $G$ which is ALO via Theorem \ref{thrm:Theorem1}. In case $P_B$ is not directly reachable from $V_{i=1}$ than the adjacent vertices set, $\{V_j\}$, is the set of FTPs from $V_A$. Via Theorem \ref{thrm:TangentToObstacle} at least one edge from $P_A$ to an FTP is part of the ALO trajectory; thus, in both cases, the second vertex, $V_{i=2}$, and the edge $(V_{i=1},V_{i=2})$ of the ALO trajectory exist on graph $G$. 

Assuming vertex number $i$ of the ALO trajectory, $V_i$, exists in $G$, let us show that the vertex $V_{i+1}$ and edge $(V_i,V_{i+1})$ of the ALO trajectory exist in $G$. In case $P_B$ is directly reachable from $V_i=(P_i,d_i)$ than the vertex $V_j=(P_B,d_j)$ and edge $(V_i,V_j)$ exist in $G$; thus, the trajectory from $P_i$ directly to $P_B$ exists in the graph $G$ (which is ALO via Theorem \ref{thrm:Theorem1}). In case $P_B$ is not directly reachable from $(P_i,d_i)$, 
than the adjacent vertices of $V_i$, $\{V_j=(P_j,d_j)\}$, are the FTPs from $V_i$. Via Theorem \ref{thrm:TangentToObstacle} at least one edge from $(P_i,d_i)$ to the FTP $V_j$ is part of the ALO trajectory; thus, vertex $V_{i+1}=V_j$ and edge $(V_i,V_{i+1})$ of the ALO trajectory also exist in graph $G$. Therefore, the ALO trajectory can be obtained from graph $G$.
\end{proof}
Next, we present an outline of the chosen path planning algorithm, in a pseudocode format. The algorithm returns an optimal (ALO) path from the initial position and altitude $(P_A,Z_A)$ to the candidate landing site $P_B$, or reports a failure if no such obstacle-avoiding path exists.   
The standard A$^*$ algorithm \cite{ref:Hart68,ref:Nilsson80} is used to guide the graph construction and search. The heuristic function $h(V_i)$, for some vertex $V_i=(P_i,d_i)$, is naturally taken as the minimal straight-glide altitude loss from position $(P_i,Z_A+d_i)$ to the target $P_B$. This heuristics is admissible (optimistic), which guarantees the optimality of A$^*$, and also consistent (monotone), which entails certain efficiency properties.
A heuristic function is said to be admissible if it never overestimates the cost of reaching the goal, i.e. the cost it estimates to reach the goal is not higher than the lowest possible cost from the current point in the path.

The heuristic function employed in Algorithm \ref{alg:3DOptimalTrajectorySearch}, $h(V_i)$, for a vertex $V_i=(P_i,d_i)$, is the minimal altitude loss in free space from point $P_i$ to the landing site $P_B$. This heuristic is a consistent underestimate of the minimal altitude loss as via Theorem \ref{thrm:Theorem1} a fixed heading trajectory is ALO in free space; therefore, the ALO trajectory in the presence of ground obstacles may not maintain a fixed heading and thus must incur greater altitude loss.
Thus, the proposed A$^*$ variant yields the optimal trajectory on the graph whose nodes are the engine cutoff location $(P_A,0)$ the landing site $(P_B,d_j)$ and the FTPs.

The proposed ALO algorithm is presented in Algorithms \ref{alg:3DOptimalTrajectorySearch}. The algorithm returns either the ALO path from $P_A$ to $P_B$, or {\it failure} if none exits. 
Algorithm \ref{alg:3DOptimalTrajectorySearch} is essentially the standard A$^*$ algorithm, following \cite{ref:Nilsson80}, with the straight-path optimal altitude loss serving as the link costs.   
Algorithm \ref{alg:functions} outlines the relevant application-specific functions. \hlhl{With standard on-board computer capability, the sparsity feature of our unique variant yields calculation cycles of less than three seconds. 
And, of course, the engineering application phase of our derivation will include Monte Carlo studies,
addressing the uncertainties issue. However, it a priori assures non-divergence, penalizing reachability as a function of uncertainty levels. In the sequel, the pseudo-codes are presented:}
\begin{algorithm}[H]
	\caption{Altitude-Loss-Optimal Trajectory Search with A$^*$}
	\label{alg:3DOptimalTrajectorySearch}
	\begin{algorithmic}[1]
		\Function{ALO-Trajectory-Search}{$P_A ,Z_A, P_B, W, \textit{DTM}$}	\label{a:l1}
		\State $V_A \gets (P_A,0)$
		\State $g(V_A) \gets 0$; $h(V_A) \gets $\Call{ALT-Loss}{$P_A, P_B$};		\label{a:l2}
		\State $f(V_A) \gets g(V_A)+h(V_A)$; $\parent(V_A) \gets \textit{nill}$ 			\label{a:l3}
		\State $\OPEN \gets$ a list ordered by $f(\cdot)$, with $V_A=(P_A,0)$ as the initial element	\label{a:l4}
		\State $\CLOSED \gets \emptyset$  (an empty list) 						\label{a:l5}
		\Loop	\label{a:l6}
		\If {$\OPEN=\emptyset$} \Return{failure} \EndIf 						\label{a:l7}
		\State $V_i=(P_i,d_i) \gets$ element $V_i$ of $\OPEN$ with the smallest value of $f(V_i)$ 			\label{a:l8}
		\State Add $V_i$ to $\CLOSED$ and  remove $V_i$ from $\OPEN$  		\label{a:l9}
		\If{$Z_A + f(V_i) > -\textit{DTM}(P_B)$} \Return{failure} \EndIf 			\label{a:l10}
		\If{$P = P_B$}  \Return{($g(V_i)$,\Call{Optimal-Path}{$V_A,V_i$})} \EndIf 		\label{a:l11}
		\State $\successors \gets$  \Call{Expand}{$P_i,d_i+Z_A$} (find the adjacent vertices of $V_i$)  					\label{a:l12}
		\For{every $P_j \in \successors$ } 				\label{a:l13}
			\State $g_j \gets g(V_i) + \Call{Alt-Loss}{P_i,P_j}$ (note that $g(V_i)=d_i$)
			\State $V_j \gets (P_j,g_j)$
			\If{$V_j \notin \CLOSED$}
    			\State $h(V_j) \gets \Call{Alt-Loss}{P_j,P_B}$ \label{a:l14}
    			
    			\If{($V_j\in\OPEN  \And  g_j < g(V_j)$)} 							\label{a:l15}
    				\State  $g(V_j) \gets g_j$; $f(V_j) \gets g(V_j)+h(V_j)$; $\parent(V_j) \gets V_i$ \EndIf 	\label{a:l16}
    			\If{$V_j\not\in\OPEN$}  add $V_j$ to $\OPEN$  					\label{a:l17}
    				\State  $g(V_j) \gets g_j$; $f(V_j) \gets g(V_j)+h(V_j)$; $\parent(V_j) \gets V_i$ \EndIf  	\label{a:l18}
		    \EndIf
		\EndFor
		\EndLoop
		\EndFunction
		\State  \label{a:l19}
		\Function{Optimal-Path}{$V_A,V_B$} 				\label{a:l20}
    			\State  $\pathh \gets [V_B]$; $V_i \gets V_B$ 	\label{a:l21}
                \While{$V_i \not = V_A$}				\label{a:l22}
				\State $V_i \gets \parent(V_i)$			\label{a:l23}
                	\State $\pathh \gets [V_i,\pathh]$		\label{a:l24}
			\EndWhile
			\State \Return{\pathh}				\label{a:l25}
		\EndFunction

		\algstore{3Dmyalg}	
	\end{algorithmic}
\end{algorithm}
The inputs to the algorithm in line \ref{a:l1} are the initial position $(P_A,Z_A)$, the target (landing site) point $P_B$, the wind components, $W=(W_{X},W_{Y})$, and the digital terrain map, DTM. 
For simplicity, we identify nodes in the search graph with their location point $P$ and altitude loss $d$. 
The basic quantities assigned to a node $V_i=(P_i,d_i)$ are $g(V_i)$, the altitude loss of the best path discovered so far from $P_A$ to $V_i$; the heuristic function $h(V_i)$, an under-estimate of the altitude loss from $V_i$ to $P_B$; and their sum $f(V_i)$ which serves as an estimate for the total altitude loss for a path that goes through $P_i$. Also, $\parent(V_i)$ identifies the predecessor to $V_i$ for tracing  the optimal path. A $\CLOSED$ list contains nodes whose minimal altitude loss has been determined, and an $\OPEN$ list contains {\it frontier} nodes that are waiting to be explored.  The $\OPEN$ list is typically implemented as a priority queue, with key $f$.  
Lines \ref{a:l2}-\ref{a:l3} initialize the search graph with the initial node $(P_A,0)$, and lines \ref{a:l4}-\ref{a:l5} initialize the $\OPEN$ and $\CLOSED$ sets. The body of the algorithm is a loop that terminates with an optimal path or a failure. In line \ref{a:l7}, \textit{failure} is declared if there are no more nodes in $\OPEN$ to explore (while $P_B$ has not been reached before). 
In lines \ref{a:l8}-\ref{a:l9}, the next node to be expanded is chosen as the one in $\OPEN$ with minimal key-value $f(V_i)$, and moved $\CLOSED$.
Line \ref{a:l10} (the  only non-standard addition) checks if the current best under-estimate in $f(V_i)$ still allows to reach the target above its ground level; if not, it returns \textit{failure}. This additional check ensures that the search does not continue in vain even if $\OPEN$ is not empty. Line \ref{a:l11} terminates the algorithm with success if a feasible path to $P_B$ with minimal altitude loss has been determined. It then returns the minimal altitude loss $g((P_B,d))$, and the optimal path which is traced back via the \textsc{Optimal-Path} function in lines \ref{a:l20}-\ref{a:l25}. Next, line \ref{a:l12} finds the successors to node $V_i$ via the function \textsc{Expand}, which receives both the node \hlhl{projection onto the 2D horizontal plane} and altitude. Finally, lines \ref{a:l13}-\ref{a:l19} update, for each successor $V_j$ which is not already in $\CLOSED$, the altitude loss $g(V_j)$ of the best path found so far to $V_j$, and its parent node in that path. 

Some explanations for the functions in Algorithm \ref{alg:functions} are interleaved as comments.  
As noted, pre-calculation of the ALO manifold is possible and useful for computational efficiency.
The two functions in lines \ref{a:l34} and \ref{a:l36} of \textsc{Expand} are not explicitly specified.   
\textsc{Directly-Reachable} simply checks if the straight-line path from $(P,d)$ to $P_B$ lies in the 
FREE part of the obstacles map \textit{LOMap}, and is a standard procedure in computational geometry. 
The \textsc{Find-Extreme-FTPs} function relies on the results of Subsection \ref{subsec:TrajOptAlgorithm} and the characterizations in Definitions \ref{def:FTP} and \ref{def:EssentialFTP},
and can be implemented by directly following these definitions. An efficient implementation however requires 
more advanced methods from computational geometry, and is outside the scope of the present paper.
Some related algorithms may be found in \cite{ref:Liu94}.
\begin{algorithm}[H]
\caption{Functions for Algorithm \ref{alg:3DOptimalTrajectorySearch}}
\label{alg:functions}
	\begin{algorithmic}[1]
		\algrestore{3Dmyalg}			

		\Function{ALO-Manifold}{$\Delta P$} 		\label{a:l26}
		\LeftComment{The ALO manifold function $M(x,y)$ is specified in Eq.\ \ref{eq:TheALOManfiold}}
           \LeftComment{For efficiency $M(\cdot)$ is pre-calculated on suitably dense grid}
           \State $\Delta Z \gets  M(\Delta P)$					\label{a:l27}					
           \State \Return{$\Delta Z$}					\label{a:l28}
		\EndFunction

		\Function{Alt-Loss}{$P1,P2$} 			\label{a:l29}
		\LeftComment{Altitude loss from $P_1$ to $P_2$ for optimal straight-glide with wind}
		\State $\Delta Z \gets \textsc{ALO-Manifold}(P2-P1)$		\label{a:l30}
 		\State \Return{$\Delta Z$}		\label{a:l31}
		\EndFunction

		\Function{Expand}{$P$, $Z_P$}  \label{a:l32}
		\LeftComment{Obtain the essential successors of node $P$}
		\State $\textit{LOMap} \gets$ \Call{Calculate-Local-Obstacle-Map}{$P$, $Z_P$} \label{a:l33}
 		\If{\Call{Directly-Reachable}{$P,P_B,\LOMap$}} \label{a:l34}
		\State  $\successors \gets \{P_B\}$                          \label{a:l35}
		\Else \ \  $\successors \gets $ \Call{Find-Extreme-FTPs}{$P,\LOMap$} \label{a:l36}
		\EndIf
	      \State \Return{$\successors$}    \label{a:l37}
		\EndFunction

		\Function{Calculate-Local-Obstacle-Map}{$P$, $Z_P$} 		\label{a:l38}
		\LeftComment{Calculate the local obstacle map as seen from $P$, for the given DTM}
           \LeftComment{We emply the safe squares approach, as per Section \ref{subsec:ToDiscrete}}
           \State Compute $\LO$ on grid points centered at $P$, Eqs.\ (\ref{eq:LO}),(\ref{eq:DiscriteLocalObstacle}) \label{a:l39}
		\State  $\LOMap \gets$ Mark all map squares as OBST or FREE           \label{a:l40}
		\State \Return{$\LOMap$}  \label{a:l41}
		\EndFunction

%
	\end{algorithmic}
\end{algorithm}	

Now we proceed to show that the proposed algorithm obtains the ALO trajectory from graph $G$.
\begin{theorem}\label{thrm:AlgoObtainsTheALoTraj}
Algorithm \ref{alg:3DOptimalTrajectorySearch} obtains the ALO trajectory from position $(P_A,Z_A)$ to $P_B$.
\end{theorem}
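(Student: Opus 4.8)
The plan is to reduce the statement to the textbook correctness of A$^*$ search over the graph $G=(V_r,E)$, combined with Theorem \ref{thrm:ALO_trajectoryExistence}, which already guarantees that the ALO trajectory is realized as a path in $G$. The first thing I would establish is that the cost structure of Algorithm \ref{alg:3DOptimalTrajectorySearch} faithfully represents altitude loss. Each edge $(V_i,V_j)$ is a fixed-heading glide between two directly-reachable points, whose altitude loss is exactly the ALO-manifold value returned by \textsc{Alt-Loss}$(P_i,P_j)=M(P_j-P_i)$; by Theorem \ref{thrm:Theorem1} this is the minimal loss over that straight link. Since altitude loss accumulates additively along a trajectory (the cost integral in Eq.\ (\ref{eq:CostFunctionEquation}) is additive over concatenated segments), the total $g$-cost of any path in $G$ equals the total altitude loss of the corresponding glide, and all edge costs are nonnegative. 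Thus the ALO problem restricted to $G$ is a standard shortest-path problem with additive nonnegative weights.

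Next I would show that the minimum-cost path in the graph exposed by the algorithm coincides with the ALO trajectory. The successor sets produced by \textsc{Expand} are the two \emph{essential} FTPs of each obstacle (Definition \ref{def:EssentialFTP}), rather than the full FTP set used in Theorem \ref{thrm:ALO_trajectoryExistence}; by Theorem \ref{thrm:TangentCountToObstacle} the ALO trajectory is still guaranteed to contain a fixed-heading glide to one of these two essential FTPs, so the inductive existence argument of Theorem \ref{thrm:ALO_trajectoryExistence} applies verbatim to this reduced graph. On one hand, every path in the graph is a concatenation of feasible fixed-heading glide segments between directly-reachable points and is therefore itself a feasible obstacle-avoiding trajectory, so its cost is at least the optimal altitude loss. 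On the other hand, the ALO trajectory is present in the graph at cost exactly equal to the optimal loss. Consequently the minimum-cost $V_A$-to-$P_B$ path in the graph equals the ALO trajectory.

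It then remains to verify that Algorithm \ref{alg:3DOptimalTrajectorySearch} returns this minimum-cost path. Here I invoke the classical optimality of A$^*$: the heuristic $h(V_i)=\textsc{Alt-Loss}(P_i,P_B)$ is admissible, since by Theorem \ref{thrm:Theorem1} the free-space straight glide underestimates the loss of any obstacle-avoiding glide to $P_B$, and it is consistent, which together guarantee that each node is closed with its true optimal $g$-value and that an optimal path is returned. Finally I would check the two termination branches: \textit{failure} in line \ref{a:l7} when \OPEN\ empties, and the supplementary pruning in line \ref{a:l10}, which declares \textit{failure} once even the optimistic estimate $Z_A+f(V_i)$ drops below the required clearance at $P_B$. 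Because $f$ underestimates the true loss of any completion, this terminates exactly when no feasible path remains, so the algorithm is both sound and complete, returning the ALO trajectory whenever one exists and \textit{failure} otherwise.

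The step I expect to be the main obstacle is reconciling the altitude-dependent, incrementally-generated graph with the textbook A$^*$ optimality proof: I must make precise that identifying search nodes by the position-altitude pair $(P_i,d_i)$ and recomputing the local obstacle map from each such node yields a consistent exposure of $G$ whose successor relation is well-defined, and that the monotonicity of altitude loss along every path keeps the A$^*$ invariants intact, so that admissibility and consistency of $h$ still imply optimal termination rather than merely assuming the standard fixed-graph guarantees.
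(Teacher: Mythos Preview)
Your proposal is correct and follows essentially the same approach as the paper: identify the algorithm as A$^*$ on the graph $G$, invoke Theorem~\ref{thrm:ALO_trajectoryExistence} for existence of the ALO trajectory in $G$, argue admissibility of $h$ via Theorem~\ref{thrm:Theorem1}, and conclude by the standard A$^*$ optimality result. Your treatment is in fact more careful than the paper's on two points the paper leaves implicit---the reduction from full FTP sets to essential FTPs via Theorem~\ref{thrm:TangentCountToObstacle}, and the soundness of the failure branches---but the core argument is the same.
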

\begin{proof}
The edges of every explored node $V_i=(P_i,d_i)$ to an adjacent node $V_j=(P_j,d_j)$ are calculated by employing the \textsc{Expand} method to obtain $P_j$ and calculating $d_j$ by adding the minimal altitude loss between $P_i$ and $P_j$ to $d_i$. In case $P_B$ is directly reachable from $V_i$ than $P_j=P_B$, otherwise $P_j$ is an FTP. Therefore, the explored edges and vertices are those of graph $G=(V_r,E)$, i.e., Algorithm \ref{alg:3DOptimalTrajectorySearch} (the A$^*$ algorithm) operates on graph $G$.

The heuristic function, $h(V_i)$, employed in Algorithm \ref{alg:3DOptimalTrajectorySearch} is the minimal altitude loss in free space from $V_i=(P_i,d_i)$ to $P_B$. This function via Theorem \ref{thrm:Theorem1} yields the minimal cost in free space $P_B$ and therefore serves as a lower bound in case $P_B$ is not directly reachable due to obstacles. Therefore, $h(V_i)$ is a consistent underestimate of the minimal altitude loss from $V_i$ to $P_B$.

Now, via \cite[Result 4  pp.\ 78]{ref:Nilsson80} we have that A$^*$ finds the optimal trajectory in graph $G$, given $h(V_i)$ is an admissible consistent heuristic. As via Theorem \ref{thrm:ALO_trajectoryExistence} the ALO trajectory exists in graph $G$, the obtained trajectory is the global ALO trajectory from $V_A=(P_A,0)$ to $P_B$.
\end{proof}

\hlhl{Emphasizing the inherent global optimality of our Optimization formulation: 
(a) the math model of the dynamics/aerodynamics of our system, incorporating the wind effect, results in cone-like manifolds, periodically updated on the fly.
(b) contours, obtained via intersections between these envelopes and the terrain, constitute inputs to the algorithm which on-line finds the tangent grid nodes, (c) over these nodes and the candidate landing strips, our accelerated algorithm propagates the trajectory towards these destinations.}
Now, we proceed to show that the search graph used in Algorithm \ref{alg:3DOptimalTrajectorySearch} is finite. First, note that our algorithm extracts obstacles as 2D polygons, due to that we can show that amount of FTPs derived from a local obstacle map is finite.
\begin{proposition}\label{prp:PolygonalFTP}
Given a polygonal local obstacles nodes set $\{(x_i,y_i)\}_{i=1}^N$, the FTPs, $\{P_j\}_{j=1}^K$, of the local obstacles is a subset of the polygonal nodes set; namely, $\{P_j\}_{j=1}^K \subset \{(x_i,y_i)\}_{i=1}^N$ .
\end{proposition}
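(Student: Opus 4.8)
The plan is to prove the claim at the level of individual points: I will show that no point in the \emph{relative interior} of a polygon edge can satisfy the FTP conditions of Definition \ref{def:FTP}, so the only surviving candidates are the polygon vertices $\{(x_i,y_i)\}_{i=1}^N$. The setup I would invoke is that, for polygonal obstacles, $\partial\text{OBST}$ is a finite union of closed edges, each joining two consecutive vertices, and that $\partial\text{OBST}\subset\text{FREE}$ (since FREE is the closed set $\{\text{LO}\ge 0\}$ while OBST is open). So I fix an FTP $P_T$ and suppose, toward a contradiction, that $P_T$ lies in the relative interior of an edge $e=[(x_a,y_a),(x_b,y_b)]$ rather than at a vertex.

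First I would treat the generic case $P_A\notin\partial\mathcal{O}$, using condition (c) of Definition \ref{def:FTP}, which forces the extended ray $P_T(\lambda)$, for $\lambda\in(1,1+\varepsilon)$, to be directly reachable \emph{and} to leave $\partial\text{OBST}$. Here I split according to the direction of the segment $P_A\to P_T$ relative to $e$. If this direction is \emph{transversal} to $e$, then since $P_T$ is directly reachable the segment $(P_A,P_T)$ lies in FREE and approaches $P_T$ from the free side of the edge; because $e$ locally separates FREE from the obstacle interior, the extension $P_T(\lambda)$ with $\lambda>1$ enters $\text{OBST}$, so it is not in FREE and hence not directly reachable, contradicting (c). If instead the direction is \emph{parallel} to $e$, then $P_T(\lambda)$ remains on the segment $e$ for all small $\lambda>1$, hence stays in $\partial\text{OBST}$, contradicting the requirement $P_T(\lambda)\notin\partial\text{OBST}$ in (c). Either way $P_T$ cannot lie in an edge interior, so $P_T$ is a vertex.

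Next I would dispatch the boundary case $P_A\in\partial\mathcal{O}$, where Definition \ref{def:FTP} supplies the extra tangent-based FTPs built from $P_A^{\pm}(\lambda)=P_A+\lambda(x'_\pm(0),y'_\pm(0))$. For a polygon the one-sided boundary derivatives $(x'_\pm(0),y'_\pm(0))$ point along the two edges incident to $P_A$, so $P_A^{\pm}(\lambda)$ simply traverses one such edge and remains on $\partial\text{OBST}$ until it reaches the adjacent vertex. Consequently $\inf_{\lambda>0}\{\lambda:P_A^{+}(\lambda)\notin\partial\text{OBST}\}$ (respectively the $\sup$ for $P_A^{-}$) is exactly the parameter value at that adjacent vertex, and the point prescribed by condition (c) of the extra cases coincides with that vertex; the $\varepsilon_2$ clamp only guards against degenerate grazing directions and introduces no non-vertex point for a genuine polygon edge. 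Thus these additional FTPs are vertices too, which together with the previous paragraph yields $\{P_j\}_{j=1}^K\subset\{(x_i,y_i)\}_{i=1}^N$, and incidentally $K\le N$, so the FTP set is finite.

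I expect the main obstacle to be the transversal sub-case: one must argue cleanly that a ray reaching an edge-interior point through FREE necessarily continues into the obstacle interior. This rests on the elementary fact that an edge of a planar obstacle locally separates the free region from the obstacle interior, so that a transversal crossing from the free side lands inside OBST; making this local-separation statement precise (and noting that it is unaffected by non-convexity of $\mathcal{O}$, since the argument is purely local to $e$) is the one place where a little care is needed, whereas the parallel sub-case and the tangent-based case both reduce to the single observation that motion along a polygon edge stays on $\partial\text{OBST}$ exactly up to the next vertex.
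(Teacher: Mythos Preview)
Your proof is correct and follows essentially the same approach as the paper: both argue that a boundary point in the relative interior of a polygon edge must violate either condition (a) or condition (c) of Definition~\ref{def:FTP}, leaving only the vertices as possible FTPs. The paper's version is terser---it dispatches case (ii) in a single sentence and defers the transversal/parallel distinction to an illustrative figure---whereas you spell out the two sub-cases explicitly and also treat the $P_A\in\partial\mathcal{O}$ tangent-based FTPs separately, which the paper does not address in its proof.
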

\begin{proof}
According to the FTP Definition \ref{def:FTP} (b) we have $\{P_j\}_{j=1}^K \in \partial \text{OBST}$. Therefore, FTPs are either on (i) from the set $\{(x_i,y_i)\}_{i=1}^N$ or (ii) on $\partial \text{OBST}$  not including $\{(x_i,y_i)\}_{i=1}^N$. In case (ii) we have that either condition (a) or condition (c) of Definition \ref{def:FTP} does not hold (see Fig.\ \ref{fig:FTPsToPolygon}). Therefore, the FTPs can only be of case (i), i.e., all FTPs are from the set $\{(x_i,y_i)\}_{i=1}^N$.
\end{proof}
\begin{figure}[H]
	\centering
	\includegraphics[width=5.2cm]{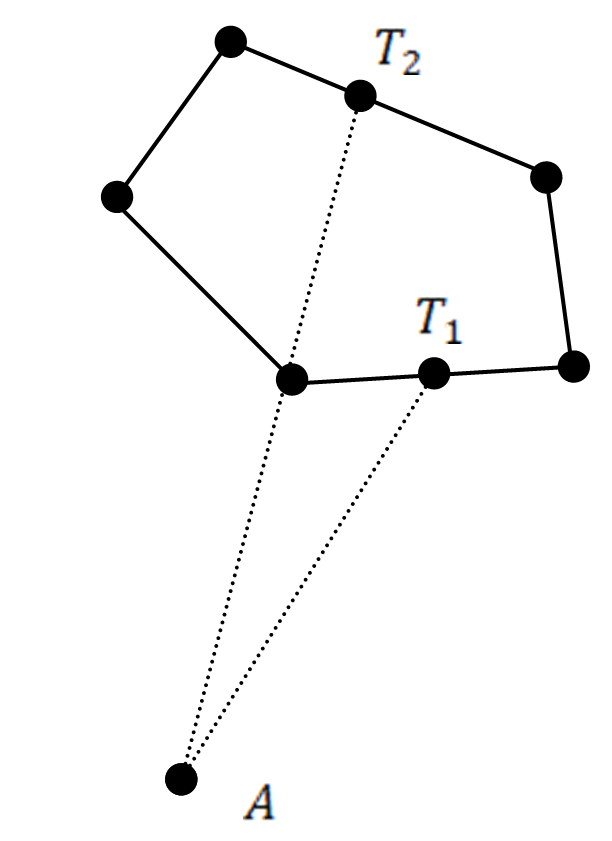}
	\caption{Illustration of a polygonal local obstacle. The points $T_1$ and $T_2$ represents a contradiction to Definition \ref{def:FTP} (c) and Definition \ref{def:FTP} (a) respectively }\label{fig:FTPsToPolygon}
	\par
\end{figure}
\begin{proposition}\label{prp:finiteGraph}
Given an $M$ by $N$ discrete elevation map. The vertices set of the graph in Algorithm \ref{alg:3DOptimalTrajectorySearch} is finite.  
\end{proposition}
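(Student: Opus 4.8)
The plan is to combine the polygonal structure of the obstacle maps induced by the safe-squares discretization with the node-location finiteness it forces, and then to invoke the standard termination behaviour of A$^*$.

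First I would fix the discretization to the safe-squares scheme of Subsection \ref{subsec:ToDiscrete}, under which every local obstacle, at every stage of the iterative graph build, is a polygon whose vertices are grid points of the $M$ by $N$ elevation map. By Proposition \ref{prp:PolygonalFTP}, the FTPs extracted from any such local obstacle map form a subset of the polygon's vertex set, hence a subset of the $MN$ grid points. Consequently the location component $P_i$ of every vertex $V_i=(P_i,d_i)$ that is ever added to $V_r$ lies in the finite set $\mathcal{S}=\{\text{grid points}\}\cup\{P_A,P_B\}$, whose cardinality is at most $MN+2$. This caps the number of distinct spatial positions the search can ever visit.

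Second I would promote this bound on locations to a bound on the full vertex set. Since \textsc{Alt-Loss} returns a strictly positive altitude loss for any nontrivial glide segment, the label $d_i$ strictly increases along every edge $(V_i,V_j)$ of the graph. The A$^*$ loop of Algorithm \ref{alg:3DOptimalTrajectorySearch} maintains a CLOSED list and never re-expands a closed node; together with the consistency of the heuristic established in the proof of Theorem \ref{thrm:AlgoObtainsTheALoTraj}, this guarantees that each location in $\mathcal{S}$ is expanded at most once, at its minimal altitude loss. Each such expansion calls \textsc{Expand} exactly once and, by Theorem \ref{thrm:TangentCountToObstacle} and Proposition \ref{prp:PolygonalFTP}, produces only finitely many successors, all lying in $\mathcal{S}$. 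Hence only finitely many vertices are ever generated, so $|V_r|<\infty$.

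The main obstacle is ruling out an unbounded proliferation of vertices that share the same location but carry different altitude-loss labels $d_i$, since the local obstacle map, and therefore the FTP set, depends on the current altitude $Z_A+d_i$ rather than on the location alone. I would close this gap in two complementary ways: (i) appeal to the CLOSED-list discipline of A$^*$, which processes each location-state only at its optimal cost and discards dominated revisits; and (ii) invoke the termination test on Line \ref{a:l10}, which halts the search once $Z_A+f(V_i)$ exceeds $-\textit{DTM}(P_B)$. Combined with the strict monotone increase of $d_i$ along edges and the finiteness of $\mathcal{S}$, this bounds the attainable altitude loss and hence the number of expansions, completing the argument that the vertex set is finite.
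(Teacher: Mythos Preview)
Your first paragraph matches the paper exactly: under the safe-squares scheme, every local obstacle is a polygon with vertices on the $M\times N$ grid, and Proposition~\ref{prp:PolygonalFTP} then forces every FTP to be a grid point, so the set of horizontal positions is bounded by $MN$ (plus $P_A,P_B$).

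Where you diverge is in the second step. Your main device---the CLOSED-list discipline of A$^*$ with a consistent heuristic---guarantees that each \emph{vertex} is expanded at most once, but in Algorithm~\ref{alg:3DOptimalTrajectorySearch} a vertex is the pair $V_i=(P_i,d_i)$, not the location $P_i$ alone. Two visits to the same grid point at different altitudes are, in the graph $G$ as defined, distinct vertices, so the CLOSED list does not by itself rule out ``unbounded proliferation of vertices that share the same location.'' Your argument (i) therefore does not close the gap you correctly identify. Argument (ii) is closer to what is needed, but ``strict monotone increase of $d_i$'' is not sufficient either: without a \emph{uniform} positive lower bound on the per-edge altitude loss, strict positivity does not bound the number of steps before the Line~\ref{a:l10} threshold is crossed.

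The paper's proof avoids the algorithmic machinery entirely and argues directly about the graph. Because every FTP lies on the discrete grid, any edge of $G$ joins two distinct grid points and therefore spans at least one grid cell; hence each edge incurs at least the altitude loss of an optimal glide over a single grid spacing---a fixed positive quantity. Combined with the finite horizontal node set, this uniform lower bound forces any path in $G$ to reach ground level after finitely many edges, so $G$ itself is finite. That uniform minimum-step argument is the missing ingredient in your write-up; once you state it, the CLOSED-list and Line~\ref{a:l10} appeals become unnecessary.
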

Let us use the safe squares representation of local obstacles mapping (Section \ref{subsec:ToDiscrete} above) as means for substantiating the following proof.
\begin{proof}[Proof of Proposition \ref{prp:finiteGraph}]
 In the safe squares approach the local obstacles are polygons with vertices as a subset of the grid nodes of an $M$ by $N$ discrete elevation map. Thus, via Proposition \ref{prp:PolygonalFTP} the FTPs are also a subset of the $M$ by $N$ discrete elevation map grid. Due to that the set of horizontal nodes is finite and bounded by $M \cdot N$ nodes. Each time the algorithm transverses from one horizontal node to a subsequent one it will lose at least the minimal altitude along a single discrete elevation map grid. Thus, in case the algorithm revisits the same horizontal point from the $M \times N$ grid it will reach the ground level after enough visits. Therefore, the graph spanned from FTPs of local obstacles the engine cutoff location, and the landing site location is finite.
\end{proof}
Since the search graph is finite, Algorithm \ref{alg:3DOptimalTrajectorySearch} will converge, via Theorem \ref{thrm:AlgoObtainsTheALoTraj}, to the ALO trajectory from position $(P_A,Z_A)$ to $P_B$.

A completed search graph is illustrated in Fig.\ \ref{fig:ResultingGraphFollowingOGS}.
The initial node is the engine cutoff point $P_A$, the target node is the intended landing site $P_B$, and the remaining nodes are FTPs that were computed iteratively as part of the algorithm. The link cost (or weight) is the optimal straight-glide altitude loss between the node positions, effectively given by the ALO manifold computed above. 
Each node position $V_i=(P_i,d_i)$ is adjoined as part of the algorithm with a cost $g(V_i)$ which holds the altitude loss from $P_A$ to that point, the heuristic $h(V_i)$ which is an under-estimate of the optimal altitude loss from $V_i$ to $P_B$, and the sum $f(V_i)=g(V_i)+h(V_i)$ which is used to choose the next node to be explored. 

\begin{figure}[H]
	\centering
	\includegraphics[width=7.2cm]{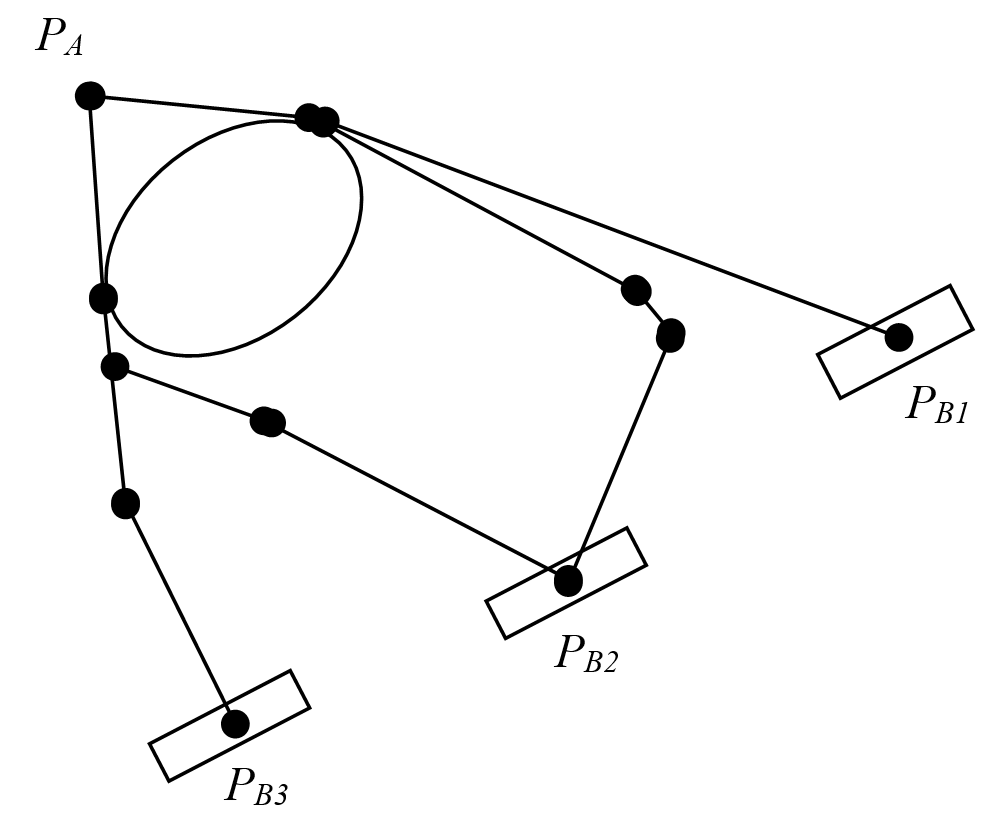}
	\caption{Illustration of the graph produced by the proposed algorithm. The obstacle shown is the local obstacle from $P_A$. Obstacles seen from other vertices are not depicted}\label{fig:ResultingGraphFollowingOGS}
	\par
\end{figure}
Note that in this subsection we have discussed the implementation of the algorithm starting from the engine cutoff point. Inside the aircraft, the algorithm will be invoked repeatedly every 300 ft of altitude loss and thus will consider changes in the environment.

\section{Accounting for the Effect of Turns}
\label{sec:EffectOfTruns}
In this section we augment our approach to account for the effect of turns. To this end, we simply add an estimate of the altitude loss associated with each change of heading, without modifying the geometry of the search graph construction. While the effect of a few turns may not be cardinal in a glide of several miles of more, this addition does provide better estimates of the total altitude loss of the selected trajectory.

We assume that, in practice, the pilot will perform a simple turn maneuver rather than an exact ALO maneuver which might be hard to carry out. In particular, we restrict our attention to fixed bank-angle maneuvers. In Appendix \ref{subsec:AltitudeLossDuetoTurns} we derive an analytical expression for the altitude-loss associated with fixed bank-angle maneuvers. This results in a circular arc in wind coordinates. We also show that the ALO turn in still air is performed at the stall velocity limit, $V_{stall}(n(\phi))$, which results in the minimal turn radius. The resulting minimal altitude loss, $\Delta Z ^*$, is given by
\begin{equation}\label{eq:AltitudeLossEquationBody}
\Delta Z^* =
\frac{2K_{SR}}{g}\left( \frac{V_{stall}(1)^4+V_0^4}{\sin (2 \phi)} \right) |\Delta \psi | + \frac{1}{2g}\left( V_{g,1}^2 - V_{g,0}^2  \right)	
\end{equation}
where $\Delta \psi$ is the heading change in the Airmass coordinates, $\phi$ the selected bank angle, and $V_{g,0}$ and $V_{g,1}$ are the ground velocities before and after the turn maneuver (to be taken as the ALO glide velocities at the respective straight-glide segments). The constants $V_{stall}$, $V_0$ and $K_{SR}$ are specified in Appendix \ref{subsec:AerodynamicModel}. We note that the second term in (\ref{eq:AltitudeLossEquationBody}) follows from the change in energy due to velocity change before and after the turn.

It may be seen from the above equation that minimal altitude loss is obtained for bank angle $\phi=\pi/4$. This value is to be selected, unless this angle is constrained to a smaller value.

The above estimate can be used in two ways. First, for a given glide trajectory composed of straight segments, the additional altitude loss can be computed consecutively for each heading change, and as needed it may be checked whether the modified altitude along the trajectory satisfies that ground clearance requirements. Alternatively, altitude corrections due to heading change may be incorporated in the search procedure in a straightforward manner by applying them to each newly-explored node, by so that they are taken into account during trajectory selection and optimization.

In Appendix \ref{subsec:NumericalExperiments} we compare the altitude-loss of the proposed turn segment to the optimal one computed by the GPOPS optimization package \cite{ref:GPOPS}. 
A related work concerning altitude loss during turns can be found in \cite{Paul18}.

\section{Reachability Scenarios}\label{sec:DemoScenarios}
Employing our generalized approach, we compute optimal maximum-range trajectories from the engine cutoff location A as a function of wind and the initial flight velocity and heading. In the following scenarios, we assume that the aircraft, Cessna 172S, weight is about 907 kg (no fuel -- faulty fuel gauge) \cite{ref:CessnaManual}). 

\emph{First scenario:}\label{subsec:FirstDemoScenario}
In the first demonstration, the aircraft has experienced engine cutoff at the horizontal position $P_A$ in Figure \ref{fig:LandingStripAttainability}, altitude 2500 m. The onboard emergency trajectory planning algorithm analyzes online the attainable landing site candidates. The algorithm employs the aircraft wind estimation
capability, yielding a 20 m/sec wind, heading North. The ALO manifold -- the surface surrounding $P_A$, depicted in Figure \ref{fig:LandingStripAttainability}, is then displayed on the LCD screen of the aircraft "Glass Cockpit" (GC), which presents the relevant obstacles. Also, the two landing sites are situated behind terrain obstacles inside the curve at the intersection of the ALO manifold and the terrain. The algorithm calculates the optimal trajectories to both landing sites. It turns out that the paved runway $B[2]$ at position $P_B[2]$ is unattainable as it is \hlhl{outside} the ALO manifold; thus, the pilot has no choice but to aim at landing site $B[1]$ which is a barren field. The algorithm yields the optimal trajectory from position $P_A$ to landing site $B[1]$, the solid red curve in Fig.\ \ref{fig:LandingStripAttainability}, and generates flight instructions aiding the pilot to follow the optimal trajectory to this second-best landing site.
\begin{figure}[H]
	\centering
	\includegraphics[width=8.2cm]{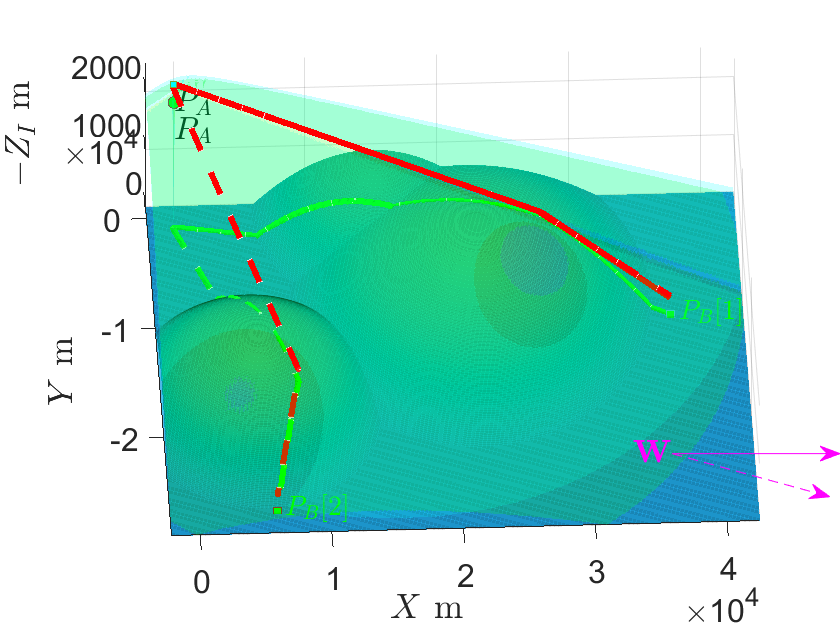}
	\caption{Reachability analysis for $\angle \mathbf{W}\in \{(0^\circ,-20^\circ)\} \wedge |\mathbf{W}| = 20 m/sec$ and obstacles 3D view}\label{fig:LandingStripAttainability}
	\par
\end{figure}

In case the wind estimator yields wind heading of -20$^\circ$, the better landing site, the paved runaway $B[2]$ can be reached. Again, the algorithm yields flight instructions, guiding the pilot to the optimal landing site.

\emph{Second scenario:}\label{subsec:SecondDemoScenario}
The aircraft experiences an engine malfunction at position $P_A$, as illustrated in Figure \ref{fig:SecondScenarioTrajectories}. The onboard emergency trajectory planning algorithm displays the altitude-loss manifold, following online wind estimation of a 15 m/sec wind heading West. The resulting manifold is then displayed on the GC screen -- the solid surface in Figure \ref{fig:SecondScenarioTrajectories}. In this scenario, the paved runway $B_2$ in position $P_B[2]$ is attainable.
The algorithm calculates the optimal trajectory from position $P_A$ to landing site $B_2$ and generates flight instructions to aid the pilot. \hlhl{After descent of about 10 km,} at position $P_A'$, the wind estimator detects that the wind magnitude has diminished to about 2 m/sec. The algorithm detects that the mountain ahead is now an obstacle that must be avoided. The onboard algorithm calculates the new optimal trajectory to $B_2$ that circumvents the obstacle, yielding the optimal velocity and channeling flight instructions directly to the GC, guiding the pilot safely to this landing site.
Our online algorithm re-directs the aircraft to circumvent the obstacle on the solid red trajectory towards $B[2]$.
\begin{figure}[H]
	\centering
	\includegraphics[width=8.2cm]{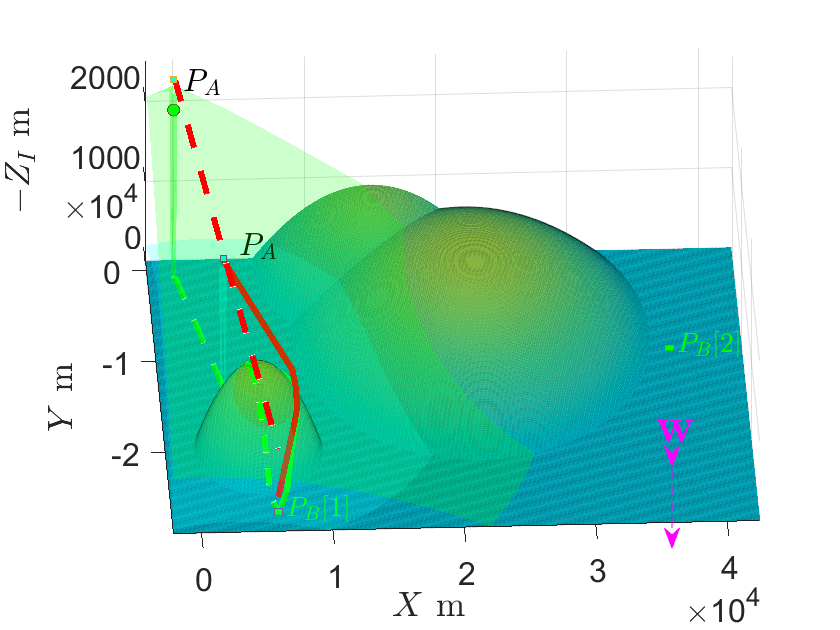}
	\caption{Reachability analysis $|\mathbf{W}|=15 m/sec\to 2 m/sec$ and obstacles 3D view}\label{fig:SecondScenarioTrajectories}
	\par
\end{figure}

\section{Conclusion}\label{sec:Conclusion}
In this work, we have derived the theoretical basis \hlhl{and the} algorithmic framework for calculating altitude-loss-optimal descent paths towards a \hlhl{candidate} landing location in case of engine power loss. The algorithm takes into account the effects of intense in-plane and crosswinds while avoiding ground-induced obstacles. 
\hlhl{First, our algorithm iteratively obtains intersections of altitude-loss-optimal manifolds, drawn from instantaneous aircraft locations with terrain elevation mapping; from these on-line-generated contours we construct sparse grids for OGS algorithmics.} The algorithm relies on a novel iterative visibility graph framework, which effectively turns the 3D problem into a sequence of 2D ones. \hlhl{I.e., we show that our algorithm needs to consider just two points on each intersection contour to optimally bypass any obstacle of whatever shape.} This serves to reduce the computational load, to allow for real-time calculation in case of emergency. \hlhl{We have proven that our algorithm is globally optimal in terms of altitude loss, subject to combined effects of winds and terrain-induced obstacles. We further include the effect of turns to assure safe near-optimal glide trajectories.

We apply our algorithm in realistic scenarios, using a Cessna 172 model, and demonstrate both the altitude-loss-optimal trajectory calculation and airstrip reachability determination. Furthermore, an initial validation flight text was conducted, as described in Appendix \ref{subsec:Flight}.

Note that our modeling assumes altitude-independent wind velocity and air density.
Relaxing the constant wind assumption invokes two challenges: (a) the ALO trajectory in free space is no longer a fixed-heading trajectory, 
(b) the local obstacle map may change in a way that it is not possible to reduce the 3D problem into local 2D problems. 
One may resort to solving numerically, using a piecewise-linear approximation of wind as function of altitude.  
Further research may relax the constant air density assumption as well.}



\section*{Acknowledgement}
We thank Mr.\ Yuval Dvir, a fully-certified flight test pilot, \hlhl{for piloting our validation flight testing,} and for his valuable insights. 
This research was supported by the Israel MoD Grant number 4441016309.

\appendices

\renewcommand{\theequation}{\thesection.\arabic{equation}}


\section{The Aerodynamic Model}
\label{subsec:AerodynamicModel}
\setcounter{equation}{0}

We start with modeling, the engine cutoff problem, as in \cite{ref:Segal}:
\begin{align}
\dot{X} &= V \cos(\gamma)\cos(\psi) + W_{X}\label{eq:X_dot}\\
\dot{Y} &= V \cos(\gamma)\sin(\psi) + W_{Y}\label{eq:Y_dot}\\
\dot{Z} &= - V \sin(\gamma)\label{eq:Z_dot}\\
\dot{V}&=-g\cdot \left( \frac{D}{mg}+\sin(\gamma) \right)\label{eq:VelocityDotEq}\\
\dot{\gamma}&=\frac{g}{V} \cdot \left( \frac{L\cos(\phi)}{mg}-\cos(\gamma) \right)\label{eq:GammmaDotEq}\\
\dot{\psi}&=\frac{L\sin(\phi)}{mV\cos(\gamma)}\label{eq:xiDotEq}
\end{align}
In Equations (\ref{eq:X_dot})-(\ref{eq:xiDotEq}) the aircraft is modeled as a point mass in a Ground frame of reference. 
The variables X,Y,Z are the North, East, Down location components of the point mass in the Ground frame. The variables $\psi$, $\gamma$, and $V$ are the heading, vertical angle and magnitude of the aircraft velocity vector relative to the air-mass.

The lift and drag forces are specified by the standard expressions:
\begin{equation*} \label{parabolic}
L = q S C_L\;,D =q S \left( C_{D0}+KC_L^2 \right)
\end{equation*}
where $q =\frac{1}{2} \rho V^2$. \hlhl{Recall that $C_L$ depends on the Angle of Attack.}

\hlhl{As in} \cite{ref:Segal}, we employ a reduced-order model where the fast variables are the true air velocity, $V$, the FPA and the bank-angle, $\phi$, while the slow variables are $X$, $Y$, $Z$ and $\psi$.
\hlhl{The control variables in this model are the pair $(V(t),\phi(t))$.}

Substituting $\dot{\gamma} \cong 0$ into Equation (\ref{eq:GammmaDotEq}) yields the load factor $n=\frac{L}{mg}$ as a function of $\phi$ and $\gamma$:\footnote{Equations (\ref{eq:LoadFactorVsPhi}) (\ref{eq:VstallEq}), (\ref{eq:BasicSinkRateEquation})), (\ref{eq:OptimalVelocityinStillAir}) are analogous to \cite[Equations (9.67), (8.18), (9.73)]{ref:Vinh} and \cite[Equation (3.6)]{ref:Irving}}
\begin{equation}\label{eq:LoadFactorVsPhi}
n(\phi,\gamma)=\frac{\cos(\gamma)}{\cos(\phi)}
\end{equation}
Thus, $n \cong n(\phi) = \frac{1}{\cos(\phi)}$ (e.g., \cite[Equation (4.21)]{ref:Irving}).

The stall limit of the aircraft is obtained by employing the load factor definition $n=\frac{L}{mg}$ and the lift force equation at $C_L=C_{Lmax}$:
\begin{equation}\label{eq:VstallEq}
V_{stall}(n(\phi))=\sqrt{\frac{2mg}{\rho S C_{Lmax}}n(\phi)}
\end{equation}
Combined with an upper limit $V_{max}$ on the flight velocity, we have $V_{stall}(n(\phi)) \leq V \leq V_{max}$.

The cost function in our problem is the altitude loss, which is the integral of the sink-rate. Starting from Equation (\ref{eq:Z_dot}), the sink rate can be expressed as a function of the control variables, $(V,\phi)$, using the parabolic drag approximation (\ref{parabolic}) and  employing (\ref{eq:VelocityDotEq}) subject to $\dot{V} \cong 0$, yielding
\begin{equation}\label{eq:BasicSinkRateEquation}
\dot{Z}= \frac{\rho SC_{D0}}{2mg}V^3+\frac{2Kmg}{\rho S}\frac{n^2}{V}
\end{equation}
It is convenient to express the sink rate in terms of $V_0$, the optimal max-range glide velocity in still air. 
The optimal max-range-optimal dynamic pressure in still air, subject to small FPA approximation, is given by $q_0=\frac{mg}{S}\sqrt{\frac{K}{C_{D0}}}$.
Therefore, $V_0$, the optimal glide velocity in still air, is:
\begin{equation}\label{eq:OptimalVelocityinStillAir}
V_0=\sqrt{\frac{2mg}{\rho S}\sqrt{\frac{K}{C_{D0}}}}
\end{equation}
The sink rate  can now be expressed in terms of $V_0$ and $\phi$ by substituting Equation (\ref{eq:OptimalVelocityinStillAir}) and $n \cong n(\phi)$ into the sink rate Equation (\ref{eq:BasicSinkRateEquation}). We denote the sink rate function as $f_0(V,\phi)$:
\begin{equation}
\dot{Z} = f_0(V,\phi)=K_{SR}\left( \frac{V^4+n(\phi)^2V_0^4}{V} \right)\label{eq:ImprovedSinkrateEquation}
\end{equation}
where $K_{SR}= \frac{\rho S C_{D0}}{2mg}$ and $n(\phi) = \frac{1}{\cos(\phi)}$. In straight glide segments $f_0(V,0)$ defines the running cost of our problem. Minimizing the cumulative running cost constitutes our optimization objective.

\section{Altitude Loss Due to Turns}
\label{subsec:AltitudeLossDuetoTurns}
\setcounter{equation}{0}

Let us derive the expression for altitude loss as a result of heading change. 
We work in the Airmass Frame. Employing Equations (\ref{eq:xiDotEq}), (\ref{eq:LoadFactorVsPhi}) and $n=\frac{L}{mg}$, yields the heading turn rate:
\begin{equation}\label{eq:QuasiXiDotTan}
\dot{\psi}=\frac{g}{V} \tan (\phi)
\end{equation}
Let us obtain the ALO maneuver. From the chain rule of derivation, 
\begin{equation*}
\dot{Z}=\frac{\partial Z}{\partial \psi}\dot{\psi} \implies \frac{\partial Z}{\partial \psi}= \frac{\dot{Z}}{\dot{\psi}}
\end{equation*}
Now, employing $n=\frac{1}{\cos(\phi)}$ and Eqs.\ (\ref{eq:ImprovedSinkrateEquation}, \ref{eq:QuasiXiDotTan}), for $\dot{Z}$ and $\dot{\psi}$:
\begin{equation}\label{eq:AltitudeLossForTurns}
\frac{\partial Z}{\partial \psi}=\frac{2K_{SR}}{g}\left( \frac{V^4 cos(\phi)^2 +V_0^4}{\sin (2 \phi)} \right)
\end{equation}
As $\frac{\partial Z}{\partial \psi}$ is monotonous in $V$, the ALO turn velocity is the lower bound, $V_{stall}(n(\phi))$. Let us substitute Eq.\ (\ref{eq:VstallEq}) into Eq.\ (\ref{eq:AltitudeLossForTurns}) to obtain the minimal altitude-loss, $\frac{\partial Z}{\partial \psi}^*$, at bank angle $\phi$:
\begin{equation}\label{eq:AltitudeLossForTurnsStallLimit}
\frac{\partial Z}{\partial \psi}^*=\frac{2K_{SR}}{g}\left( \frac{V_{stall}(1)^4+V_0^4}{\sin (2 \phi)} \right)
\end{equation}
Further, the latter expression is clearly minimal for $\phi=\frac{\pi}{4}$.
Therefore, the ALO maneuver is to turn at constant velocity, $V_{stall}(n(\frac{\pi}{4}))$, and a bank angle, $\phi=\frac{\pi}{4}$.
This result corresponds the 'controls histories' in \cite[Figures 9(c),11(c),12(c)]{ref:WolekGlider}, which were acquired by numerical integration of extremals. The velocity control change to the stall limit while increasing the load factor, $n$, when maneuvering; however, in these examples, the load factor increases to more than $n(\frac{\pi}{4})$ in order to meet the boundary constraints. 

With constant $\phi$ and $V$, the aircraft performs the turn with a constant radius in the Airmass Frame, namely  
\begin{equation}\label{eq:TurnRadius}
R = \frac{V^2}{g \tan(\phi)}
\end{equation}
\begin{figure}[H]
	\centering
	\includegraphics[width=6.2cm]{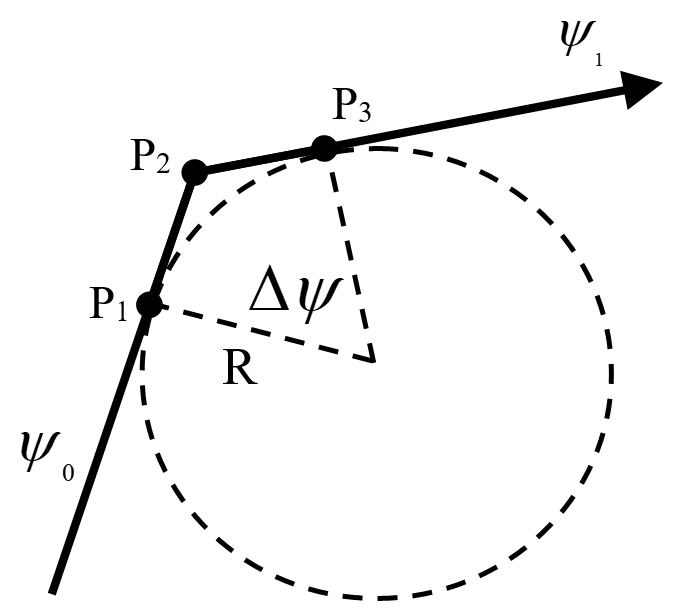}
	\caption{The spatial geometry of turns in the Airmass Frame}\label{fig:TurnGeometry}
	\par
\end{figure}
For a heading change from $\psi_0$ to $\psi_1$ (see Figure \ref{fig:TurnGeometry}), the altitude loss during the turn itself, as follows from (\ref{eq:AltitudeLossForTurnsStallLimit}), is given by
\begin{equation}
(\Delta Z)_a =
\frac{2K_{SR}}{g}\left( \frac{V_{stall}(1)^4+V_0^4}{\sin (2 \phi)} \right) |\psi_1 -\psi_0|
\end{equation}

In addition to the altitude-loss during the turn itself, we need to consider the effect of the velocity change before and after the maneuver. Note that under our assumptions, the flight velocity is a fast time-scale variable, so that the aircraft can change its velocity instantaneously when it enters and exits the the turn maneuver.
This variation in kinetic energy therefore directly translates to variation in potential energy, leading to the following estimate for  the total altitude loss:
\begin{equation}\label{eq:AltitudeLossEquation}
\Delta Z^* = (\Delta Z)_a 
+ \frac{1}{2g}\left( V_{g,1}^2 - V_{g,2}^2 \right)
\end{equation}
where $V_{g,0}$ and $V_{g,1}$, respectively, are the aircraft ground velocities before and after the turn maneuver.



\section{Performance Analyses}
\label{subsec:NumericalExperiments}
\setcounter{equation}{0}


\hlhl{We present here sample sensitivity analyses for the formulations of the previous subsection for altitude-loss due to the turn maneuver. We employ the aerodynamic model of a Cessna 172.}
We fitted a quadratic drag approximation to the model \hlhl{in} \cite{ref:CessnaDataSheet}. The resulting aircraft model parameters are: $C_{D0}$ = 0.0329, $K$ = 0.0599 with mass $m$ = 907 kg, S = 15.9793 m$^2$, $V_{stall}(1) $ = 27.27 m/sec. We further used a standard atmosphere model with 15\si{\degree}C at sea level.

\emph{Turn Performance Analysis:}
Figure \ref{fig15} depicts the altitude loss rate according to Eq.\  (\ref{eq:AltitudeLossForTurns}) as a function of the turn speed and bank angle. Figure \ref{fig16} shows the turn radius per Eq.\ (\ref{eq:TurnRadius}). We observe that indeed the optimal solution is to fly at stall limit and a bank angle of 45 degrees.
\begin{figure}[H] 
	\centering
	\includegraphics[width=8.2cm]{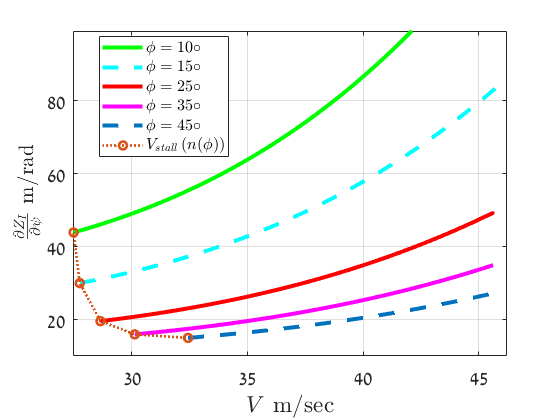}
	\caption{Altitude loss vs.\ the turn velocity in the Airmass Frame}
	\par
	\label{fig15}
\end{figure}
\begin{figure}[H] 
	\centering
	\includegraphics[width=8.2cm]{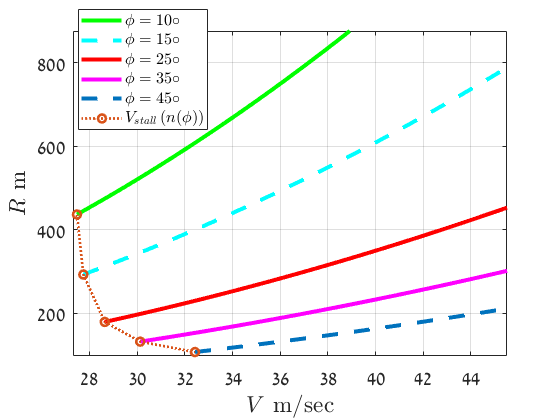}
	\caption{Turn radius vs.\ the turn velocity in the Airmass Frame}
	\par
	\label{fig16}
\end{figure}
 
\emph{Comparison to a GPOPS optimal solution:}
We proceed to demonstrate our analytic solution for ALO maneuver vs. the numerical solution of GPOPS \cite{ref:GPOPS}. 
The GPOPS solver was given the full state Eqs. (\ref{eq:X_dot}-\ref{eq:xiDotEq}) without our assumptions of small glide angles.
Ours is an analytical approximation, assuming a step change in velocity (and consequently in energy), whereas the GPOPS search is quasi-continuous.
\begin{figure}[H]
	\centering
	\includegraphics[width=8.2cm]{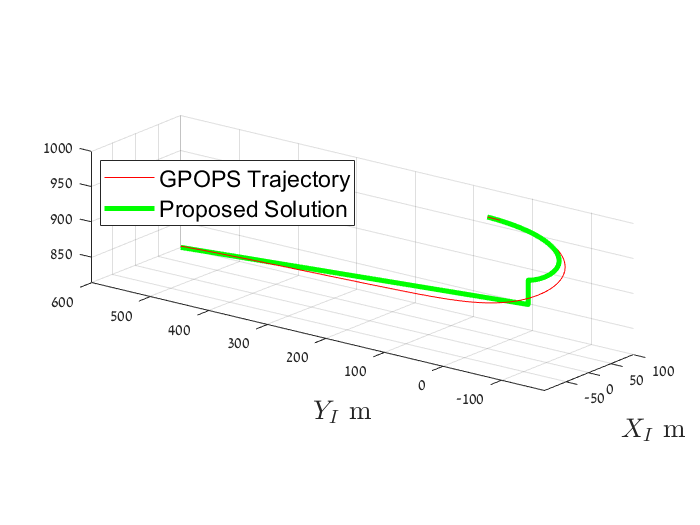}
	\caption{GPOPS vs analytic solution 3D view trajectory with $W_{Y}= -20$ and $W_{X}= 0$}\label{fig:GpopsWithWindXYZ}
	\par
\end{figure}
Fig.\ \ref{fig:GpopsWithWindXYZ} shows a good match between our solution and the optimal-turn trajectory obtained via GPOPS. 
\hlhl{Note that in our simulation we have assumed a step change in velocity vector direction (in Section} \ref{sec:EffectOfTruns} \hlhl{we discuss optimal concurrent descent and turn maneuvering). GPOPS numerical scheme does not assume such a step-change in the V direction.}

\section{Flight Experiment}
\label{subsec:Flight}
\setcounter{equation}{0}

Towards a validation of the proposed model and algorithms, a flight experiment plan was devised.
This Appendix describes briefly the initial experiment that was conducted.

We chose to flight-test our optimal algorithm on a Cessna 172 -- to demonstrate both the optimal airstrip choice and the trajectory generation, and the following of this trajectory by pilot-in-distress.

A team of undergraduate students pursued this objective, supervised by Daniel Segal, and advised by Dr.\ Aharon Bar-Gill. Technical assistance was provided by the Technion CRML laboratory (Control, Robotics and Machine Learning) and its engineering team, headed by Mr.\ Koby Kohai.
The following text was provided by A.\ Bar-Gill.

\begin{figure}[H] 
	\centering
	\includegraphics[width=11.2cm]{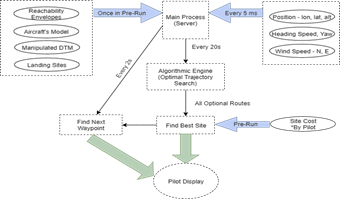}
	\caption{System schematics –- on-line algorithmic computation}
	\par
	\label{fig20:Schematics}
\end{figure}

The team has developed a dedicated simulation for testing the algorithm implementation. It involves flight modeling, generation of optimal trajectory towards the preferable airstrip and cues on a screen - for the pilot to track this trajectory. The offline simulation was then adapted, its software was embedded into the airborne PC and run in the lab as a hybrid simulation –- for debugging the algorithm implementation and firmware interfaces.

\begin{figure}[H] 
	\centering
	\includegraphics[width=11.2cm]{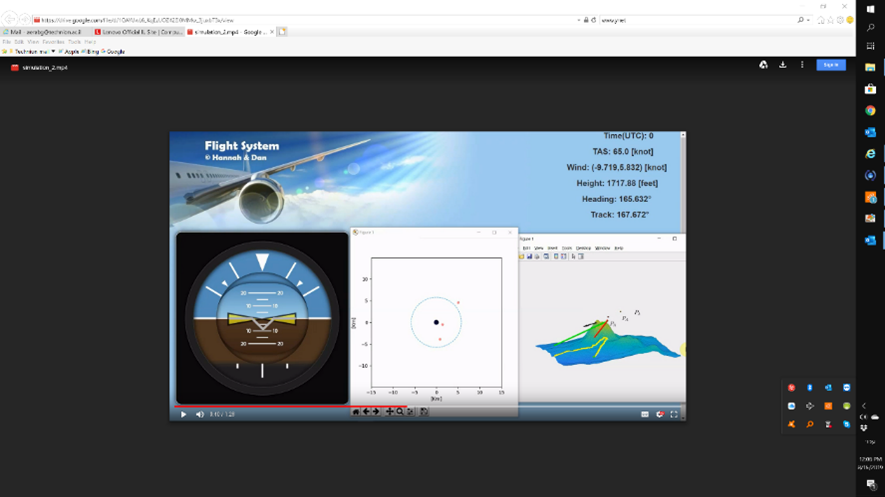}
	\caption{Pilot Dedicated Display}
	\par
	\label{fig21:Display}
\end{figure}

The simulated scenario is shown on screens of both the system portable PC and the pilot’s dedicated display (used in course of our flight demo):
\begin{itemize}
\item[(a)]
The right side of the pilot display depicts the dynamically-generated WP’s (Way Points). Following these WP’s, the pilot circumvents terrain elevations – on his way to the goal airstrip. The algorithm repeatedly checks the validity of the originally-chosen optimal trajectory (green), with black arrow pointing along – vis-à-vis th e red ones, which represent trajectories towards higher-weighing candidate airstrips
In the simulated scenario, the aircraft velocity vector’s pointing tracks the green trajectory by following the dynamic Way Point, running along the trajectory and periodically comparing with the backup trajectory (red).
\item[(b)]
The central,  radar-like, display screen– the black circle at its center symbolizes the aircraft, and the red points – the candidate landing strips. The blue “disk” represents the instantaneous reachability envelope.
\item[(c)] 
The pilot’s guidance cues –- heading and roll to point at along-trajectory-running WP and pitch angle / Descent Rate / V\textunderscore air.
The left side of the pilot display, the artificial horizon – “V”-symbol with “wings” stands for instantaneous pointing of the velocity vector of the aircraft. In order to track the optimal trajectory, computed by the researchers’ algorithm, the pilot must align this “V with wings” symbol with the symbol, which comprises two yellow triangles.
\end{itemize}

\noindent\emph{The overall system:}
\begin{itemize}
\item
VectorNav unit, onboard computer, pilot display, Client-Server based on UDP connection, next waypoint in shared file.
\end{itemize}

\noindent \emph{Real-Time Inputs:}
\begin{itemize}
\item
Position (Lon, Lat, Alt), Heading (Speed, Yaw), Wind Speed (North, East).
\end{itemize}

\noindent \emph{Pre-Processing:}
\begin{itemize}
\item DTM -– high resolution elevation map array, sliced into 3 parts.
\item Aircraft model – CSV file, pre-calculation of parameters for quick access.
\item Reachability Envelopes -– calculation yielding optimal altitude loss trajectories.
\item Landing sites – CSV file, containing optional landing sites.
\end{itemize}

\noindent\emph{Algorithm:}
\begin{itemize}
\item Reachable Landing Sites – landing sites within the chosen area. 
\item Best Landing Site – the site with the highest attribute. 
\item W.P.-to-W.P. Guidance, following the Next Waypoint calculation.
\item Pilot route visualization - using an artificial horizon.
\end{itemize}

At the airfield, the students’ team installed the airborne system, and inputted into the system PC the locations of the candidate airstrips (East of Mount Tabor) and their respective weightings. 

\emph{The Flight Test Team (professional flight-test crew, volunteering their expertise):}
Test pilot, Yuval Dvir tracked the optimal trajectory cue on a dedicated display, 
Safety pilot, Mike Dvir maintained the glide-optimal ﬂight velocity (value above Idle), and kept checking the sanity of the algorithm implementation.

\begin{figure}[H] 
	\centering
	\includegraphics[width=11.2cm]{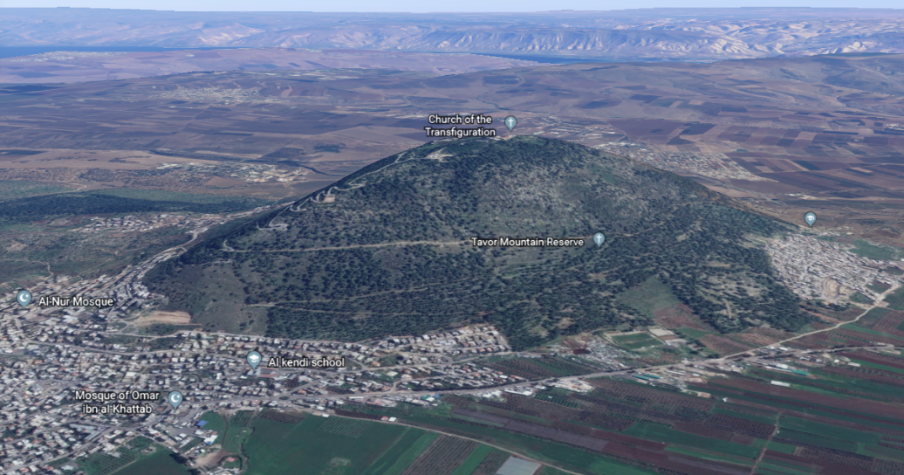}
	\caption{Mount Tabor}
	\par
	\label{fig22:Tabor}
\end{figure}

\emph{Flight test scenario -- July 30, 2019:} 
Assuming engine failure at altitude of 2200 [ft] and heading 100$^{\circ}$  West of Mount Tabor, and potential landing strips East of it, the algorithm has chosen the best landing strip available East of Mount Tabor and computed the optimal trajectory for the test pilot to track. And indeed, the pilot has circumvented Mount Tabor appropriately, from North, towards the virtual landing strip North-East of Mount Tabor, preferred over another virtual site South-East of Tabor. 
As a result, the concept was in-flight validated, implementing our real-time algorithm, and tracking the globally-optimal trajectory by the pilot.


\begin{thebibliography}{}

\bibitem{ref:Segal} 
D. Segal, A. Bar-Gill and N. Shimkin, 
``Max-Range Glide in Engine Cutoff Emergencies Under Severe Wind", 
\textit{Journal of Guidance, Control, and Dynamics,} Vol. 42(8), pp. 1822-1835, August 2019. 

\bibitem{ref:Adler} 
A. Adler, A. Bar-Gill and N. Shimkin,  
``Optimal Flight Paths for Engine-out Emergency Landing using Flight Primitives", 
\textit{Proc. 24th Chinese Control and Decision Conference (CCDC)}, Taiyuan, China, pp.\ 2908-2915,  May 2012.

\bibitem{ref:Fixed-WingUAV} 
X. Fang, N. Wan, H. Jafarnejadsani, D. Sun, F. Holzapfel and N. Hovakimyan, 
``Emergency Landing Trajectory Optimization for Fixed-Wing UAV under Engine Failure", 
\textit{Proc. AIAA Scitech Forum,} San Diego, CA, pp. 7–11, January 2019. 

\bibitem{ref:HeuristicGeneticAlgorithmApproaches} 
J. da Silva Arantes, M. da Silva Arantes, C. F .M. Toledo, O. J. Trindade and C. B. Williams, 
``Heuristic and Genetic Algorithm Approaches for UAV Path Planning under Critical Situation", \textit{International Journal on Artificial Intelligence Tools,} Vol. 26, No. 1, 2017.

\bibitem{ref:EvaluatingHardwarePlatforms}
J. da Silva Arantes, M. da Silva Arantes, A. B. Missaglia, E. do Valle Simoes and C. F. M. Toledo,  
``Evaluating Hardware Platforms and Path Re-planning Strategies for the UAV Emergency Landing Problem",
\textit{Proc. IEEE 29th International Conference on Tools with Artificial Intelligence,}
Boston, MA, pp. 937–944, November 2017.

\bibitem{ref:RRT_Star_AR} 
S. Choudhury, S. Scherer and S.\\ Singh, 
``RRT*-AR: Sampling-Based Alternate Routes Planning with Applications to Autonomous Emergency Landing of a Helicopter", 
\textit{IEEE International Conference on Robotics and Automation}, May 2013.

\bibitem{ref:EmergencyLandingGuidance} 
J. Slama, ``Emergency Landing Guidance for an Aerial Vehicle with Motor Malfunction", Thesis, Faculty of Electrical Engineering, Czech Technical University, Prague, 2018.




\bibitem{ref:ReachabilityBasedLanding} 
A. K. Akametalu, C. J. Tomlin and M. Chen, 
``Reachability-Based Forced Landing System", 
\textit{Journal of Guidance, Control and Dynamics,} Vol. 41, No. 12, Dec. 2018.

\bibitem{ref:MeuleauEmergencyLandingPlanner2009a} 
N. Meuleau, C. Plaunt, D. E. Smith and T. Smith, 
``A comparison of risk sensitive path planning methods for aircraft emergency landing", 
\textit{Proc. ICAPS'09 Workshop on Bridging The Gap Between Task And Motion Planning,} pp. 71–80, 2009.

\bibitem{ref:MeuleauEmergencyLandingPlanner2009b} 
N. Meuleau, C. Plaunt, D. E. Smith and T. Smith, 
``An Emergency Landing Planner for Damaged Aircraft",
\textit{Proc. 21st Conference on Innovative Applications of Artificial Intelligence (IAAI-09),} pp. 114-121, AAAI Press, 2009. 

\bibitem{ref:MeuleauEmergencyLandingPlannerExpriment} 
N. Meuleau, C. Neukom, C. Plaunt, D. E. Smith and T. Smith, 
``The emergency Landing Planner Experiment",
\textit{Proc. ICAPS Scheduling and Planning Application Workshop (SPARK'11),} pp.\ 60-67, 2011.  

\bibitem{ref:AircraftLossOfControl2017} 
C. M. Belcastro, J. V. Foster, G. H. Shah, I. M. Gregory, D. E. Cox, A. A. Crider, L. Groff, R. L. Newman and D. H. Klyde, 
``Aircraft Loss of Control Problem Analysis and Research Toward a Holistic Solution", 
\textit{Journal of Guidance, Control and Dynamics}, Vol.\ 40, No.\ 4, pp.\ 733-775, 2017. 

\bibitem{ref:VoronoiDiagrams} B. Grüter, D. Seiferth, M. Bittner and F. Holzapfel, 
``Emergency Flight Planning using Voronoi Diagrams", 
\textit{AIAA Scitech Forum}, article 1056, San Diego, CA, January 2019.

\bibitem{ref:TotalLossofThrust} E. M. Atkins, I. A. Portillo and M. J. Strube, 
``Emergency Flight Planning Applied to Total Loss of Thrust", 
\textit{Journal of Aircraft,} Vol. 43, No. 4, July-August 2006.

\bibitem{ref:MobileRobotBoundedAccelerations} 
J. Ben-Asher, M. Wetzler, E. D. Rimon and J. Diepolder, 
``Optimal trajectories for a mobile robot with bounded accelerations in the presence of a wall or a bounded obstacle", 
\textit{Proc. 27th Mediterranean Conference on Control and Automation (MED)}, pp.\ 481–488, 2019.

\bibitem{Scholer11}
F. Scholer, A. la Cour-Harbo and M. Bisgaard, 
``Configuration space and visibility graph generation from geometric workspaces for UAVs", 
\textit{Proc. AIAA Guidance, Navigation, and Control Conference,} Portland, Oregon, paper 6416, August 2011. 

\bibitem{Maini16}
P. Maini, and P. Sujit, 
``Path planning for a UAV with kinematic constraints in the presence of polygonal obstacles", 
\textit{Proc. Int. Conference on Unmanned Aircraft Systems (ICUAS),} pp. 62–67, 2016.

\bibitem{Frontera17}
G. Frontera, D. J. Martín, J. A. Besada and D. Gu, 
``Approximate 3D Euclidean Shortest Paths for Unmanned Aircraft in Urban Environments", 
\textit{Journal of Intelligent and Robotic Systems,} Vol. 85, pp. 353–368, 2017.

\bibitem{Ahmad17}
Z. Ahmad, F. Ullah, C. Tran and S. Lee, 
``Efficient Energy Flight Path Planning Algorithm Using 3-D Visibility Roadmap for Small Unmanned Aerial Vehicle", \textit{International Journal of Aerospace Engineering,} Vol. 2017, Article ID2849745, 2017. 

\bibitem{Majeed18}
A Majeed, and S. Lee, 
``A fast global flight path planning algorithm based on space circumscription and sparse visibility graph for unmanned aerial vehicle", \textit{Electronics,} Vol. 7(12), Article 375, 2018.

\bibitem{Damato18}
E. D'Amato, I. Notaro, and M. Mattei, 
``Optimal Flight Paths over Essential Visibility Graphs", 
textit{Proc. International Conference on Unmanned Aircraft Systems (ICUAS),} Dallas, TX, pp. 708-714, 2018,  

\bibitem{Damato19}
E. D’Amato, I. Notaro, L. Blasi and M. Mattei, 
``Smooth Path planning for Fixed-Wing Aircraft in 3D Environment Using a Layered Essential Visibility Graph", 
\textit{Proc. International Conference on Unmanned Aircraft Systems (ICUAS),} Atlanta, GA, 2019, pp. 9-18. 

\bibitem{Huang19}
S. Huang and R. S. H. Teo, 
``Computationally Efficient Visibility Graph-Based Generation of 3D Shortest Collision-Free Path Among Polyhedral Obstacles for Unmanned Aerial Vehicles", 
\textit{Proc. International Conference on Unmanned Aircraft Systems (ICUAS),}  Atlanta, GA, 2019, pp. 1218-1223.

\bibitem{ref:HybridOptimalControl} 
R. Chai, A. Savvaris, A. Tsourdos, S. Chai and and Y. Xia, 
``Trajectory optimization of space maneuver vehicle using a hybrid optimal control solver", 
\textit{IEEE Trans. on Cybernetics}, Vol. 49, Issue 2, Feb. 2019.

\bibitem{ref:StochasticSpacecraftTrajectory} 
R. Chai, A. Savvaris, A. Tsourdos, S. Chai and Y. Xia, 
``Stochastic spacecraft trajectory optimization with the consideration of chance constraints", 
\textit{IEEE Trans. Control Syst. Technol.}, Vol. 28, Issue 4, July 2020.

\bibitem{ref:ProbabilisticConstraints} 
R. Chai, A. Savvaris, A. Tsourdos, S. Chai, Y. Xia, S. Wang, 
``Solving trajectory optimization problems in the presence of probabilistic constraints", 
\textit{IEEE Trans. Cybernetics}, Vol. 50, Issue 10, Oct. 2020.

\bibitem{ref:LaVallePllaningAlg} 
S. M. LaValle,
\textit{Motion Planning}, Cambridge University Press, Cambridge, 2006.



\bibitem{ref:USGS} 
"Earth Explorer", U.S. Geological Survey Website. https://earthexplorer.usgs.gov/


\bibitem{ref:Hart68}
P. E. Hart, N. J. Nilsson and B. Raphael, 
``A Formal Basis for the Heuristic Determination of Minimum Cost Paths", 
\textit{IEEE Transactions on Systems Science and Cybernetics,} Vol. 4, No. 2, pp. 100–107, 1968.

\bibitem{ref:Nilsson80}
N.\ J.\ Nilsson, \textit{Principles of Artificial Intelligence,}
Tioga Publishing Company, Palo Alto, CA, 1980.

\bibitem{ref:Liu94}
Y. Liu and S. Arimoto, 
``Computation of the Tangent Graph of Polygonal Obstacles by Moving-Line Processing",
\textit{IEEE Transactions on Robotics and Automation,} Vol. 10, No. 6, pp. 823-830, 1994.



\bibitem{ref:CessnaManual} 
\textit{Cesna 172S Skyhawk Information Manual}, Cessna Aircraft Company, Revision 5, Wichita, Kansas, July 2004.


\bibitem{ref:WolekGlider} A. Wolek, E. M. Cliff and C. A. Woolsey, 
``Energy-Optimal Paths for a Glider with Speed and Load Factor Controls", 
\textit{Journal of Guidance, Control and Dynamics}, Vol. 39, No. 2, p. 397-405, February 2016. 

\bibitem{Paul18}
S. Paul, F. Hole, A. Zytek and C. A. Varela,  
``Wind-Aware Trajectory Planning for Fixed-Wing Aircraft in Loss of Thrust Emergencies", 
\textit{Proc. IEEE/AIAA 37th Digital Avionics Systems Conference (DASC),} London, 2018.

\bibitem{ref:Vinh} N. X. Vinh, 
\textit{Flight Mechanics of High Performance Aircraft}, Cambridge University Press, Cambridge, England, 1993.

\bibitem{ref:Irving} F.\ Irving, 
\textit{The Paths of Soaring Flight}, Imperial College Press, London, 1999, pp.\ 17-32.

\bibitem{ref:CessnaDataSheet} J. Scott and M. Selig,  
``Cessna 172: Small Single Piston Engine General Aviation Airplane", Aircraft Dynamics Models for Use with FlightGear, Course Notes:, AAE319 Airplane Flight Dynamics, University of Illinois, Spring 2000, https://m-selig.ae.illinois.edu/apasim/Aircraft-uiuc/cessna172-aae319-v4/aircraft.dat.

\bibitem{ref:GPOPS} 
A. Rao, D. Benson, C. Darby, M. Patterson, C. Francolin, I. Sanders and G. Huntington, 
``Algorithm 902: GPOPS, a MATLAB Software for Solving Multiple-Phase Optimal Control Problems Using the Gauss Pseudospectral Method", 
\textit{ACM Transactions on Mathematical Software}, Vol. 37, No. 2, 2010.



\bibitem{ref:engine_failures_2016}
``Engine failures and malfunctions in light aeroplanes, 2009-2014", 
Australian Transport Safety Bureau, March 2016. 
https://www.atsb.gov.au/media/5769864/ar-2013-107-final-report.pdf

\bibitem{survey2010}
C. Goerzen, Z. Kong and B. Mettler, 
``A Survey of Motion Planning Algorithms from the Perspective of Autonomous UAV Guidance",  \textit{Journal of Intelligent Robot Systems,} Vol. 57, 2010.

\bibitem{survey2019}
Y. Yang, J. Pan and W. Wan, ``Survey of optimal motion planning",
\textit{IET Cyber-Systems and Robotics,} Vol. 1(1), 2019,  pp. 13-19. 9

\bibitem{survey2020}
L. Quan, L. Han, B. Zhou, S. Shen, and F. Gao, ``Survey of UAV motion planning", \textit{IET Cyber‐Systems and Robotics,} Vol. 2, May 2020, pp. 14-21.

\bibitem{Ayhan2019}
B. Ayhan, C. Kwan, B. Budavari, J. Larkin and D. Gribben, ``Preflight Contingency Planning Approach for Fixed Wing UAVs with Engine Failure in the Presence of Winds”, \textit{Sensors}, Vol. 19, January 2019.

\end{thebibliography}
\end{document}